\documentclass[11pt]{article}

\usepackage[
	persons={nima,michal,june,elizabeth},
	authors=blocks,
	tags=noreview,
	bibliosources={refs.bib}
]{pomegranate}
%force-texpad-dependency: refs.bib

\AddTag<noreview>
\Tag<review>{\RemoveTag<noreview>}

% -- Commands
\DeclareOperator{\PP}
\DeclareOperator{\TV}
\DeclareOperator{\mix}
\DeclareOperator{\Cov}
\DeclareOperator{\polylog}
\DeclareOperator{\KL}
\DeclareOperator{\diag}
\DeclareOperator{\sign}
\DeclareOperator{\Var}
\DeclareOperator{\Eng}
\DeclareDocumentMathCommand{\DKL}{}{\D_\KL}
\DeclareDocumentMathCommand{\corMat}{}{\Psi^{\text{cor}}}

\DeclareMathOperator{\calD}{\mathcal{D}}

% -- Preamble

\title{Domain Sparsification of Discrete Distributions using Entropic Independence}
\Tag<noreview>{
	\author[1]{Nima Anari}
	\author[2]{Micha\l\ Derezi\'nski}
	\author[1]{Thuy-Duong Vuong}
	\author[2]{Elizabeth Yang}
	\affil[1]{Stanford University, \url{{anari,tdvuong}@stanford.edu}}
	\affil[2]{UC Berkeley, \url{{mderezin,elizabeth_yang}@berkeley.edu}}
}
\Tag<review>{
	\author{}
}
\date{}

\begin{document}
	\maketitle
	
	\begin{abstract}
		We present a framework for speeding up the time it takes to sample from discrete distributions $\mu$ defined over subsets of size $k$ of a ground set of $n$ elements, in the regime where $k$ is much smaller than $n$. We show that if one has access to estimates of marginals $\P_{S\sim \mu}{i\in S}$, then the task of sampling from $\mu$ can be reduced to sampling from related distributions $\nu$ supported on size $k$ subsets of a ground set of only $n^{1-\alpha}\cdot \poly(k)$ elements. Here, $1/\alpha\in [1, k]$ is the parameter of entropic independence for $\mu$. Further, our algorithm only requires sparsified distributions $\nu$ that are obtained by applying a sparse (mostly $0$) external field to $\mu$, an operation that for many distributions $\mu$ of interest, retains algorithmic tractability of sampling from $\nu$. This phenomenon, which we dub domain sparsification, allows us to pay a one-time cost of estimating the marginals of $\mu$,  and in return reduce the amortized cost needed to produce many samples from the distribution $\mu$, as is often needed in upstream tasks such as counting and inference.
		
		For a wide range of distributions where $\alpha=\Omega(1)$, our result reduces the domain size, and as a corollary, the cost-per-sample, by a $\poly(n)$ factor. Examples include monomers in a monomer-dimer system, non-symmetric determinantal point processes, and partition-constrained Strongly Rayleigh measures. Our work significantly extends the reach of prior work of Anari and Derezi\'nski who obtained domain sparsification for distributions with a log-concave generating polynomial (corresponding to $\alpha=1$). As a corollary of our new analysis techniques, we also obtain a less stringent requirement on the accuracy of marginal estimates even for the case of log-concave polynomials; roughly speaking, we show that constant-factor approximation is enough for domain sparsification, improving over $O(1/k)$ relative error established in prior work.
	\end{abstract}

	\section{Introduction}

Sparsification has been a crucial idea in designing many fast algorithms; famous examples include cut or spectral graph sparsifiers \cite{ST11} and dimension reduction using sparsified Johnson-Lindenstrauss transforms \cite{DKS10}. In this work, we address the question of sparsifying discrete distributions, with the goal of speeding up the fundamental task associated with distributions: sampling from them.

As an illustrative example building towards our notion of sparsification for distributions, consider the task of sampling a (uniformly) random edge in a graph. Suppose that we have a graph on $n$ non-isolated vertices, and we are allowed to make adjacency queries. How many of the $n$ vertices do we have to ``look at'' before we observe both endpoints of some edge? The answer to this question depends on the structure of the graph; for a star graph, where a single vertex is connected to the other $n-1$, we would have to find this central vertex to have any chance of observing an edge; so no amount of smart guessing can result in looking at $\ll n$ vertices. However, for regular graphs, where every vertex has the same degree, because of the Birthday Paradox phenomenon, it is enough to look at a sample of $O(\sqrt{n})$ vertices picked uniformly at random to observe an edge between two of them with overwhelming probability. The bound of $O(\sqrt{n})$ is indeed the best possible, since in a random perfect matching (degree $1$ regular graph), the best and only sensible strategy is to pick vertices at random.
 
This curious phenomenon generalizes to hypergraphs as well. On a $k$-uniform hypergraph, with hyperedges representing sets of $k$ vertices, to observe a hyperedge one has to generally look at $\simeq n$ vertices in the worst case. But on regular hypergraphs, a substantially smaller sample, namely $\simeq n^{1-1/k}$ many vertices, will contain a hyperedge with high probability \cite[see, e.g.,][for forms of Birthday Paradox related to $k$-sets and $k$-collisions]{KDKKSBB06}. Notice that this improvement quickly deteriorates as $k$ gets large, and becomes meaningless as soon as $k\simeq \log n$.

When can the bound of $n^{1-1/k}$ for regular hypergraphs be improved, ideally to a polynomially small fraction of the $n$ vertices, even for $k\gg 1$? Moreover, suppose that instead of desiring just one of the hyperedges, we want to extract an (approximately) \emph{uniformly random hyperedge}. Can we still produce a hyperedge following this distribution, by only looking at a small subset of vertices? In a nutshell, how small of a (random) vertex set can we look at in order to have a distributionally representative ``sparsification'' of the entire hypergraph?

In this work, we tie the answer to these questions to notions of high-dimensional expansion, specifically the notion of \emph{entropic independence} introduced by \textcite{AJKPV21}. To every measure, a.k.a.\ weighted hypergraph, on size $k$ subsets of $\set{1,\dots,n}$ denoted by $\mu:\binom{[n]}{k}\to\R_{\geq 0}$, one can associate a parameter of entropic independence $1/\alpha\in[1,k]$, defined formally in \cref{sec:prelims}. A larger $\alpha$ corresponds to better high-dimensional expansion. We show that the hypergraph defined by $\mu$, while being truly $k$-uniform, behaves almost as if it was $(1/\alpha)$-uniform: informally, we can ``sparsify'' this hypergraph by looking at only $n^{1-\alpha}\cdot\poly(k)$ vertices, under some ``regularity assumptions.'' To avoid confusion with other classical concepts of graph and hypergraph sparsification, which primarily keep the vertex set while deleting a subset of the edges, we call this type of sparsification \emph{domain sparsification}.

A long line of recent works have obtained breakthroughs in sampling and counting by viewing combinatorial distributions as (weighted) hypergraphs and studying notions of high-dimensional expansion for them \cite{ALOV19,CGM19,AL20,ALO20,CLV20,GM20,ALOVV20,CGSV21,FGYZ21,AASV21,Liu21,BCCPSV21,JPV21,AJKPV21,ALO21,CFYZ21}. A central theme in all of the aforementioned works is the establishment of some form of high-dimensional expansion for a hypergraph encoding the probability distribution of interest. At a high-level, these notions can be viewed as measures of proximity to independent/product distributions. Sampling from distributions extremely close to product distributions is roughly as easy as sampling i.i.d.\ from the marginal distribution over single elements; it is no surprise then, that for distributions with limited correlations, knowledge of marginals can boost sampling time. This is what we formally establish in this work.

Our main result applies to distributions that have \emph{entropic independence} \cite{AJKPV21}, a notion stronger than \emph{spectral independence} \cite{ALO20}, but weaker than \emph{fractional log-concavity} and \emph{sector-stability} \cite{AASV21}. Roughly speaking, a background measures $\mu$ over $k$-sized sets is entropically independent, if for any (randomly chosen) set $S$, the relative entropy of a uniformly random element of $S$ is at most $1/\alpha k$ fraction of the relative entropy of $S$, where we usually take $\alpha=\Omega(1)$. The main intuition leading to our results is that high correlations in such distributions must be limited to small groups of elements. By sampling enough many elements from the domain, we cover these correlated groups.

Similar to graph sparsification, in domain sparsification we need to reweigh the sparsified object. This is achieved by the standard operation of applying an external field. For a weight vector $\lambda\in \R_{\geq 0}^n$, the $\lambda$-external field applied to $\mu$ is the distribution $\lambda\star \mu$ defined by
\[ \lambda\star\mu(S)\propto \mu(S)\prod_{i\in S}\lambda_i. \]

\begin{theorem}[Informal] \label{thm:main}
Let $\mu: \binom{[n]}{k} \to \mathbb{R}_{\geq 0}$ be $(1/\alpha)$-entropically independent. Suppose that we have access to estimates $p_1,\dots,p_n$ of the marginals and an oracle that can produce i.i.d.\ samples $i\in [n]$ with $\P{i}\propto p_i$; suppose that our estimates satisfy $p_1+\dots+p_n=k$ and
\[ p_i \geq \Omega(\P{i\in S})  \]
for all $i$. Then we can produce a random sparse external field $\lambda\in \R_{\geq 0}^n$ with at most $n^{1-\alpha}\cdot \poly(k)$ nonzero entries, in time $n^{1-\alpha}\cdot \poly(k)$, such that a random sample $S$ of $\lambda\star \mu$ approximately follows the distribution defined by $\mu$.
\end{theorem}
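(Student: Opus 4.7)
My plan is to construct $\lambda$ by independent Bernoulli tilting, bound the expected KL divergence $\E_\lambda\DKL(\lambda\star\mu\,\|\,\mu)$ using $(1/\alpha)$-entropic independence, and convert to a TV guarantee via Markov and Pinsker.

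\textbf{Construction.} Let $q_i = \Theta\!\bigl(p_i\,k^{a-1}n^{-\alpha}\bigr)$ for a sufficiently large constant $a$, and independently set $\lambda_i = 1/q_i$ with probability $q_i$ and $\lambda_i = 0$ otherwise. The expected number of nonzero coordinates is $\sum_i q_i = n^{1-\alpha}\poly(k)$, concentrated by Chernoff; the support is realized with $O(n^{1-\alpha}\poly(k))$ calls to the i.i.d.\ oracle for $p$. Crucially $\E\lambda_i = 1$, so $\lambda\star\mu$ is an unbiased random tilt of $\mu$.

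\textbf{From expected KL to TV.} Expanding log-ratios,
\[\DKL(\lambda\star\mu\,\|\,\mu)\;=\;\E_{S\sim\lambda\star\mu}\Bigl[\sum_{i\in S}\log\lambda_i\Bigr]\,-\,\log Z_\lambda,\]
with $Z_\lambda=\sum_S\mu(S)\prod_{i\in S}\lambda_i$. Independence and $\E\lambda_i=1$ give $\E_\lambda Z_\lambda=1$, hence $\E_\lambda\log Z_\lambda\le 0$ by Jensen. My goal is to upgrade this into $\E_\lambda\DKL(\lambda\star\mu\,\|\,\mu)\le 1/\poly(k)$; Markov's inequality then gives $\DKL\le 1/\poly(k)$ with probability $1-o(1)$ over $\lambda$, and Pinsker converts this to $\TV(\lambda\star\mu,\mu)\le 1/\poly(k)$.

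\textbf{Using entropic independence.} The quantitative KL bound exploits $(1/\alpha)$-entropic independence in its equivalent form as a one-step relative-entropy contraction for the down-up walk of $\mu$ under arbitrary external fields. The plan is to apply this contraction iteratively along the independent Bernoulli coordinates of $\lambda$: the KL increment from revealing $\lambda_i$ is $O\!\bigl(q_i\,\P{i\in S}/\alpha\bigr)$ via the contraction applied to the previously-revealed tilted distribution. Summing over $i$ and invoking $p_i = \Omega(\P{i\in S})$ together with $\sum_i p_i = k$ yields $\E_\lambda\DKL \le O\!\bigl(k^2/(\alpha\,n^\alpha\,\poly(k))\bigr)$, which becomes $1/\poly(k)$ by making the $\poly(k)$ factor in the definition of $q_i$ large enough.

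\textbf{Main obstacle.} The delicate step is controlling $\log Z_\lambda$: each active $\lambda_i = 1/q_i$ is potentially huge, so $\prod_{i\in S}\lambda_i$ exhibits large fluctuations that defeat naive Taylor or second-moment estimates. I plan to bypass this by working purely in KL distance via the contraction at each conditional step, rather than expanding $\log Z_\lambda$ directly, and to use that $\mu$ is supported on $k$-sets so at any realization at most $k$ coordinates of $\lambda$ act on a given $S$. This keeps cumulative contributions bounded by $\poly(k)$ when integrated over the active support of $\lambda$, and the hypothesis $p_i \ge \Omega(\P{i\in S})$ is used only to translate marginal expectations under $\mu$ into expectations under the surrogate $p$, losing a constant.
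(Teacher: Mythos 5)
Your plan hinges on showing $\E_\lambda\DKL(\lambda\star\mu\,\|\,\mu)\le 1/\poly(k)$ and then applying Markov plus Pinsker to get $\TV(\lambda\star\mu,\mu)$ small for a \emph{typical} $\lambda$. That target is unachievable, not merely hard to prove. For any realization of $\lambda$ with support of size $m = n^{1-\alpha}\poly(k)$, the distribution $\lambda\star\mu$ is supported inside $\binom{\supp(\lambda)}{k}$, so its entropy is at most $\log\binom{m}{k}$, whereas $\mu$ (after the isotropic transformation, which equalizes marginals) has entropy comparable to $\log\binom{n}{k}$. Since $\DKL(\nu\|\mu)\ge -H(\nu)-\log\max_S\mu(S)$, one gets $\DKL(\lambda\star\mu\,\|\,\mu)\gtrsim k\log(n/m) = k\alpha\log n - k\log\poly(k)$ for essentially every $\lambda$. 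This is $\Omega(k\alpha\log n)$, not $O(1/\poly(k))$, so neither Markov nor Pinsker can rescue the argument: the per-$\lambda$ KL (and TV) is necessarily large, and only the \emph{mixture} $\E_\lambda[\lambda\star\mu]$ can be close to $\mu$. Convexity of KL runs in the wrong direction here, so bounding the expected KL is strictly the wrong quantity.

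A second, structural gap is the absence of any ``anchor'' state. Your $\lambda$ is a product of independent Bernoulli tilts, so with nonzero probability $\supp(\lambda)$ contains no set in $\supp(\mu)$, making $Z_\lambda = 0$ and $\lambda\star\mu$ undefined. The paper avoids this by making the external field $\lambda = \1_{S_0\cup T}$ where $S_0\in\supp(\mu)$ is the current state of a Markov chain and $T$ is a uniformly random $(t-k)$-subset of the complement; the inclusion of $S_0$ both guarantees nonempty support and is essential for the stationarity calculation (\cref{prop:fast sampling}). Your ``using entropic independence'' step is also unsubstantiated: the claimed per-coordinate KL increment $O(q_i\,\P{i\in S}/\alpha)$ from ``revealing $\lambda_i$'' does not follow from \cref{def:ent-ind}, which controls $\DKL(\nu D_{k\to 1}\|\mu D_{k\to 1})$ against $\DKL(\nu\|\mu)$, not the effect of tilting one coordinate. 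Finally, a smaller issue: with $q_i = \Theta(p_i k^{a-1}n^{-\alpha})$ you get $\sum_i q_i = \Theta(k^a n^{-\alpha})$, not $n^{1-\alpha}\poly(k)$; the correct scaling for the sparsity you want is $q_i = \Theta(p_i\, n^{1-\alpha}\poly(k)/k)$.

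The paper's route is quite different and sidesteps all of the above: after isotropic transformation (\cref{prop:near-isotropic}), it analyzes the one-step output distribution of the anchored Markov chain $M^t_\mu$, and proves the pointwise majorization $\P{S_1 = S}\ge(1-\epsilon)\mu(S)$ for every $S$ (\cref{lem:subsample lower bound}). This is an ``average-case'' bound on $\E_{T}\big[\mu(S_0\cup T)\big]$, obtained by plugging a specific external field into the tangent-plane inequality \cref{eq:entropic ind tangent} that characterizes entropic independence; the majorization then yields $\TV\le\epsilon$ in a single step (\cref{prop:tv-distance-one-step}). Notably, it never needs (and could not obtain) closeness of the individual conditional distribution $\1_{S_0\cup T}\star\mu$ to $\mu$.
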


\cref{thm:main} follows directly from \cref{prop:near-isotropic,prop:fast sampling,lem:subsample lower bound}. We outline our techniques in \cref{subsec:outline}.

Notice that we do not need degree-regularity of $\mu$, which would be equivalent to $\P_{S\sim \mu}{i\in S}$ being \emph{exactly} the same for all $i$. Instead, it is enough to just have an \emph{estimate} of these marginals, a weaker condition than regularity. This is because instead of sampling a subset of vertices uniformly at random, we can sample a biased subset of vertices, with probability biases defined by the marginals, and domain sparsification will still work.

Prior to our work, domain sparsification was known for distributions with log-concave generating polynomials (the case of $\alpha=1$) \cite{AD20}, based on techniques inspired by earlier algorithms for sampling from determinantal point processes (an even narrower class) \cite{Der19,DCV19}. All of these distributions satisfy forms of negative dependence \cite{BBL09,AD20} that were crucial in obtaining domain sparsification for them. Our work significantly extends the reach of domain sparsification beyond these classes; as we will see, for \emph{any distribution} $\mu$, we have $\alpha\geq 1/k$, and as a simple corollary we get nontrivial domain sparsification for \emph{any distribution} $\mu$ as long as $k=O(1)$, a result which appears to be nontrivial on its own.

The main application of domain sparsification is in accelerating the time it takes to produce multiple samples from a distribution $\mu$. Suppose that an algorithm $\mathcal{A}$ can produce (approximate) samples from a distribution $\mu$ and any distribution obtained from it by an external field, in time $T(n, k)$, which usually depends polynomially on $n$.\footnote{Typically the running time has logarithmic dependencies on the approximation error and potentially magnitude of external fields, but for simplicity of exposition we hide them here.} Then after a preprocessing step, where we use $\mathcal{A}$ to estimate the marginals of $\mu$, we can produce new samples in time $T(n^{1-\alpha}\cdot \poly(k), k)$ per sample, which is polynomially smaller than $T(n, k)$, as long as $k$ is smaller than some $\poly(n)$ threshold. Notice that the preprocessing step has to be done only once, and its cost gets amortized when we are interested in obtaining multiple samples from $\mu$. A careful implementation, directly adapted from what was done for log-concave polynomials by \textcite{AD20}, can bootstrap domain sparsification with estimation of marginals to complete the preprocessing step in roughly $\simeq T(n, k)+n\cdot \poly(k, \log n)\cdot T(n^{1-\alpha}\cdot \poly(k), k)$ time.

\begin{corollary}[Informal, adapted from \cite{AD20}]\label{cor:marginals}
	Suppose that we have an algorithm $\mathcal{A}$ that can produce approximate samples from any external field $\lambda$ applied to $\mu$ in time $T(m, k)$, where $m$ is the sparsity of $\lambda$. Then we can produce the marginal estimates $p_i$ and the i.i.d.\ sampling oracle required in \cref{thm:main} in time
	\[ O\parens*{T(n, k)+n\cdot \poly(k, \log n)\cdot T(n^{1-\alpha}\cdot \poly(k), k)}. \]
	
	Further, for any desired $t$, we can produce $t$ i.i.d.\ approximate samples from $\mu$ in time
	\[ O\parens*{T(n, k)+\max\set{t, n\cdot \poly(k, \log n)}\cdot T(n^{1-\alpha}\cdot \poly(k), k)}. \] 
\end{corollary}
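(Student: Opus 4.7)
The plan is to follow the bootstrap scheme of \cite{AD20}, substituting in the stronger domain-sparsification guarantee of \cref{thm:main}. A single up-front call to $\A$ on the full distribution $\mu$, at cost $T(n,k)$, seeds us with a crude marginal upper bound; from then on, every sample of $\mu$ we need can be produced in time $T(n^{1-\alpha}\poly(k),k)$ via \cref{thm:main}, so the job reduces to counting how many sparsified samples suffice to refine the estimates to constant-factor accuracy.

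First I would use the one expensive call to draw $S_0\sim\mu$ and initialize $p_i^{(0)}$ as simple upper bounds (e.g.\ $\Theta(k/n)$, boosted on $i\in S_0$), rescaled so $\sum_i p_i^{(0)}=k$. This already satisfies the hypothesis of \cref{thm:main} with $\poly(n)$ slack, which is enough to enable sparsified sampling at the stated cost modulo larger hidden constants. Then I would perform $O(\log n)$ refinement rounds; in each round we draw $\Theta(n\poly(k,\log n))$ samples from $\mu$ via the current sparsified external field, form empirical marginals, and update $p_i^{(t+1)}$ so the multiplicative slack against the true marginals shrinks by a constant factor. A Chernoff bound plus a union bound over the $n$ elements guarantees that w.h.p.\ each $p_i^{(t+1)}$ remains a valid $\Omega(\P{i\in S})$ lower bound, and after $O(\log n)$ rounds $p_i=\Theta(\P{i\in S})$. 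The i.i.d.\ sampling oracle is then the standard alias table over $p_1,\dots,p_n$, negligible in cost; summing over rounds yields the claimed preprocessing bound.

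For the per-sample guarantee, once preprocessing is done, each new sample from $\mu$ is produced by a single application of \cref{thm:main} at cost $T(n^{1-\alpha}\poly(k),k)$; the $\max\set{t,n\poly(k,\log n)}$ form follows because the preprocessing overhead amortizes into the per-sample cost as soon as $t\geq n\poly(k,\log n)$.

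I expect the main obstacle to be maintaining the invariant $p_i^{(t)}\geq \Omega(\P{i\in S})$ together with $\sum_i p_i^{(t)}=k$ throughout the bootstrap. Rare elements (those with $\P{i\in S}\ll k/n$) have noisy empirical estimates, and underestimating any single $p_i$ would violate \cref{thm:main}; the standard fix — clamping estimates to a conservative floor and only ever \emph{decreasing} the slack via one-sided (Bernstein) upper-tail bounds — has to be set up so that the floor vertices contribute $o(k)$ to the normalization and the empirical updates provably shrink the slack geometrically. Once this invariant is pinned down, the remaining ingredients (concentration over $n\poly(k,\log n)$ draws, alias-table construction, telescoping per-round costs) are routine.
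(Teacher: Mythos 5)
The paper itself gives no self-contained proof of this corollary: it defers to the bootstrap construction of \cite[Lem.~23]{AD20}, observing in a remark that the constant-factor accuracy permitted by \cref{lem:subsample lower bound} (rather than the $O(1/k)$ relative error needed in \cite{AD20}) reduces the sample complexity of that construction. Your proposal attempts to re-derive a bootstrap from scratch, and it captures the overall shape of the argument, but it contains a quantitative error that would break the claimed running time.

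The problematic step is your assertion that the crude initial estimates $p_i^{(0)}\approx \Theta(k/n)$ ``already satisfy the hypothesis of \cref{thm:main} with $\poly(n)$ slack, which is enough to enable sparsified sampling at the stated cost modulo larger hidden constants.'' The slack is not a hidden constant: it enters the domain-sparsification bound polynomially. Tracing through \cref{prop:near-isotropic} and \cref{lem:subsample lower bound}, if $p_i \geq (1/C)\,\P_{S\sim\mu}{i\in S}$ then the near-isotropy parameter is $C$, and the required intermediate sample size is $t = \Omega\bigl(n^{1-\alpha}(Ck^2)^\alpha \poly(k)\bigr)$. With $C = \poly(n)$, say $C\geq n$, this is $\Omega(n\cdot\poly(k))$, i.e.\ there is no sparsification at all. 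Consequently, your early refinement rounds, which each draw $\Theta(n\,\poly(k,\log n))$ samples, cost $T(n,k)$ per sample rather than $T(n^{1-\alpha}\poly(k),k)$, so the total preprocessing cost as you have laid it out is $\Omega(n\,\poly(k,\log n)\cdot T(n,k))$ --- polynomially larger than the corollary's bound of $O\bigl(T(n,k)+n\,\poly(k,\log n)\cdot T(n^{1-\alpha}\poly(k),k)\bigr)$.

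Closing this gap requires either matching the number of samples drawn per round to the current slack so that the per-round costs telescope, or arguing as in \cite{AD20} that one can jump almost directly to constant slack using few expensive samples (exploiting, e.g., that high-marginal elements are observed frequently even in a small batch, while the floor $\Theta(k/n)$ already handles the rest). The bookkeeping you do flag --- one-sided tail bounds, a conservative floor, preservation of $\sum_i p_i = k$ --- is genuinely needed, but the central obstacle you must also address is that the bootstrap can never afford $n\,\poly(k,\log n)$ samples at per-sample cost $T(n,k)$. The second half of the statement (amortizing over $t$ samples) is fine once preprocessing is in place.
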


Sampling is often used to solve the problem of approximate counting, that is computing the partition function
\[ \sum_{S}\mu(S). \]
To obtain an $\epsilon$-relative error approximation, known reductions between counting and sampling \cite{JVV86} introduce at least a multiplicative factor of $1/\epsilon^2$ to the sampling time. Directly adapting the same technique for log-concave polynomials \cite{AD20} and combining with our new domain sparsification result, we obtain an $\epsilon$-relative error of the counts in time $\simeq T(n, k)+\max\set{n, 1/\epsilon^2}\cdot \poly(k, \log n)\cdot T(n^{1-\alpha}\cdot \poly(k), k)$. Notice that here $1/\epsilon^2$ is multiplied by the term $T(n^{1-\alpha}\cdot \poly(k), k)$ that can be substantially smaller than $T(n, k)$; as a result, we can get a substantially improved running time for the high-precision regime where $\epsilon$ is inverse-polynomially small.

\begin{corollary}[Informal, adapted from \cite{AD20}]\label{cor:counting}
	Suppose that we have an algorithm $\mathcal{A}$ that can produce approximate samples from any external field $\lambda$ applied to $\mu$ in time $T(m, k)$, where $m$ is the sparsity of $\lambda$. Then we can compute an $\epsilon$ relative error approximation of $\sum_{S}\mu(S)$ in time
	\[ O\parens*{T(n, k)+\max\set{n,1/\epsilon^2}\cdot \poly(k, \log n)T(n^{1-\alpha}\cdot \poly(k), k)}. \] 
\end{corollary}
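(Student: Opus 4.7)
The plan is to adapt the sampling-to-counting reduction of \textcite{AD20}, replacing their sampler by the accelerated one from \cref{cor:marginals}. After a one-time preprocessing of cost $T(n,k) + n\cdot \poly(k, \log n)\cdot T(n^{1-\alpha}\poly(k), k)$, \cref{cor:marginals} gives us, for any sparse $0/1$ external field $\lambda$, approximate i.i.d.\ samples from $\lambda\star \mu$ at per-sample cost $T(n^{1-\alpha}\poly(k),k)$.

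The identity I would estimate is a telescope over single-element deletions from the ground set. Order the elements adaptively so as to always delete next the element of smallest current marginal, and let $\mu_t$ denote the distribution obtained from $\mu$ by applying the sparse $0/1$ external field that forbids the first $t$ elements. Then $\mu_0 = \mu$, $\mu_{n-k}$ is supported on a single set whose weight is computable in closed form, and
\[
	Z(\mu) = Z(\mu_{n-k})\prod_{t=0}^{n-k-1}\frac{1}{1 - \P_{S\sim\mu_t}[t+1 \in S]}.
\]
Each Bernoulli ratio is estimated by Monte Carlo from i.i.d.\ samples of $\mu_t$ drawn via the oracle above; with a Chebyshev-style allocation of samples across steps (proportional to the per-step standard deviation, plus at least one sample per step so we visit each $\mu_t$), the standard sampling-to-counting analysis yields $\epsilon$ relative error in $Z(\mu)$ from a total of $\max\{n, 1/\epsilon^2\}\cdot \poly(k, \log n)$ samples.

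The main obstacle in the adaptation is ensuring that the AD20 variance analysis carries over beyond log-concave polynomials. The key property needed is that, along the annealing path $\mu_0,\dots,\mu_{n-k}$, the per-step standard deviations $\sigma_t$ satisfy $\sum_t \sigma_t \leq \poly(k, \log n)$. This should follow from two facts: (a) entropic independence is preserved under $0$-external fields, so every $\mu_t$ remains $(1/\alpha)$-entropically independent and hence enjoys the marginal concentration implicit in our main theorem; and (b) the smallest-marginal-first deletion rule keeps each $\P_{S\sim\mu_t}[t+1 \in S]$ small and controls how the remaining marginals shift when an element is removed, mirroring the negative-association step used by \textcite{AD20} in the log-concave case. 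Granting this variance bound, combining the sample count with the per-sample cost $T(n^{1-\alpha}\poly(k),k)$ and the preprocessing cost of \cref{cor:marginals} yields the claimed total running time.
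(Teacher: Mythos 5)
The paper gives no actual proof of \cref{cor:counting}: the statement is explicitly informal, and its entire justification in the text is the one-sentence remark preceding it plus the citation to \textcite{AD20}. So your sketch has to stand on its own, and the missing step is exactly the variance bound you flag as ``the key property needed'': $\sum_t \sigma_t \leq \poly(k,\log n)$. This cannot hold for the single-element-deletion annealing you propose. Write $p_t = \P_{S\sim\mu_t}[t{+}1\in S]$ and $\sigma_t = \sqrt{p_t/(1-p_t)}$, the per-sample relative standard deviation of the Bernoulli estimate of $1-p_t$. You would apply this scheme after the isotropic transformation (which is what makes the rest of the machinery go), and there every marginal is $\Theta(k/(n-t))$ throughout the path, so the smallest-marginal-first rule gains you nothing: $\sigma_t = \Theta(\sqrt{k/(n-t)})$ and $\sum_{t=0}^{n-k-1}\sigma_t = \Theta(\sqrt{kn})$. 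Chebyshev allocation then requires $(\sum_t\sigma_t)^2/\epsilon^2 = \Theta(kn/\epsilon^2)$ samples, which for $k = n^{o(1)}$ and $\epsilon \lesssim n^{-1/2}$ exceeds $\max\{n,1/\epsilon^2\}\cdot\poly(k,\log n) = \poly(k,\log n)/\epsilon^2$ by a factor of roughly $n$. The same $\Theta(\sqrt{kn})$ appears if one instead telescopes the $k$ conditional-inclusion probabilities $\P[s_j\in S\mid s_1,\dots,s_{j-1}\in S]$, for the same reason; so this is not merely a matter of choosing a better ordering.

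The bottleneck is that a deletion path of length $n-k$ with per-step relative variance $\Theta(k/(n-t))$ has accumulated standard deviation $\Theta(\sqrt{kn})$ no matter how you order or allocate. What \textcite{AD20} actually do, and what any correct adaptation needs, is a path with only $\poly(k,\log n)$ stages each having $O(1)$ ratio variance --- for instance, delete at each round a \emph{batch} of low-marginal elements whose total marginal weight is a small constant, so each ratio $Z(\mu_{t+1})/Z(\mu_t)$ is $\Omega(1)$ while the ground set shrinks geometrically in $O(k\log n)$ rounds --- or estimate $Z$ directly from $\E_T[\tilde\mu(T)]$ using the relative-variance bound on the restricted partition function, which in the log-concave case follows from negative dependence. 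Either way the estimator and the variance bookkeeping are qualitatively different from your per-element scheme, and extending them to fractionally-log-concave $\mu$ (where the negative-dependence inequalities used by \textcite{AD20} are unavailable) is exactly the content one would need to fill in; your sketch does not do that, it assumes a variance bound that is false as stated.
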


\begin{remark}
	For many applications, we can derive entropic independence of $\mu$ from a stronger property called fractional log-concavity \cite{AASV21,AJKPV21}. For an $\alpha$-fractionally log-concave distribution, recent work of \textcite{AJKPV21} established Modified Log-Sobolev Inequalities for natural (multi-step) down-up random walks. For simplicity of exposition, assume that $1/\alpha\in \Z$ and $\alpha=\Omega(1)$. Then, these random walks produce approximate samples from $\mu$ in the following number of steps:
	\[ O\parens*{k^{1/\alpha}\cdot \log \log \frac{1}{\P_{\mu}{S_0}}}, \]
	where $S_0$ is the starting point of the random walk. Further, each step of the random walk requires querying $\mu$ at $n^{1/\alpha}$ points, leading to a total runtime of
	\[ O\parens*{(kn)^{1/\alpha}\cdot \log \log \frac{1}{\P_{\mu}{S_0}}}. \]
	In most settings, such as when the bit-complexity of $\mu$ is bounded by $\poly(n)$, the extra $\log \log(1/\P_{\mu}{S_0})$ can be safely ignored, as long as we make sure $S_0$ is in the support. So one can think of this Markov chain as an algorithm $\mathcal{A}$ that, up to this initial step of finding a suitable starting point, satisfies
	\[T(n, k)=\tilde O\parens*{(nk)^{1/\alpha}}.\]
	Note that for this choice of the algorithm $\mathcal{A}$ and running time $T$, the bounds in \cref{cor:marginals,cor:counting} simplify as 
	\[ n \cdot \poly(k, \log n)\cdot T(n^{1-\alpha}\cdot \poly(k), k)\simeq \poly(k, \log n)\cdot T(n, k). \]
	However, our results apply to any choice of a base sampling algorithm $\mathcal{A}$.
\end{remark}

A challenging part of obtaining our results is the lack of negative dependence inequalities, which were used by the prior work of \textcite{AD20}. These negative dependence inequalities result in domain sparsification with sparsified domain size solely depending on $k$, with no dependence on $n$. We show in \cref{sec:lowerbound} that our analysis of our domain sparsification scheme is tight. An intriguing question is if we can find other domain sparsification schemes, perhaps using higher-order marginals, that sparsify domains to size $\poly(k, \log n)$? We make the following conjecture.

\begin{conjecture}[Informal]\label{conj:high-order}
Let $\mu$ be an $\alpha$-fractionally-log-concave distribution for some $\alpha=\Omega(1)$. Given  access to estimates for high-order marginals of  the form $\P_{S \sim \mu}{T \subseteq  S}$ for all $T$ of size $\l\simeq 1/\alpha$, and an oracle that produces i.i.d.\ samples from these marginals, there is a domain sparsification scheme for $\mu$ which reduces the domain size to only $\poly(k)$.
\end{conjecture}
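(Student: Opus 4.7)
The plan is to extend the argument of \cite{AD20} (the $\alpha=1$ case proved in this paper) to the higher-order setting, using the $\ell$-marginal oracle directly. Draw $m=\poly(k)$ i.i.d.\ $\ell$-subsets $T_1,\ldots,T_m$ from the given oracle for $\Pr_{S\sim\mu}[T\subseteq S]$; let $V=\bigcup_i T_i\subseteq[n]$, which has size at most $\ell m=\poly(k)$ since $\ell=O(1)$. Output a random external field $\lambda\in\R_{\geq 0}^n$ supported on $V$, defined in terms of the sampled $T_i$'s and the marginal estimates. The goal is to choose $\lambda$ so that $\lambda\star\mu$ is $\epsilon$-close in TV to $\mu$.

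The key technical step, and principal obstacle, is an $\ell$-th order analog of the negative correlation inequality for log-concave polynomials that was used in \cite{AD20}. Roughly, for disjoint $\ell$-subsets $T_1,\ldots,T_r$ with $r\ell\leq k$, one wants
\[
\Pr_{S\sim\mu}[T_1\cup\cdots\cup T_r\subseteq S]\leq C(k)\cdot\prod_{j=1}^r\Pr_{S\sim\mu}[T_j\subseteq S],
\]
up to an appropriate combinatorial normalization, for some $C(k)=\poly(k)$. For $\ell=1$ this is the classical negative correlation of log-concave polynomials. For $\ell\geq 2$, $\alpha$-FLC measures can exhibit positive pairwise correlations at the element level, so the inequality must hold only at the $\ell$-subset level; extracting it from $\alpha$-FLC (which says $p_\mu(x)^{1/\ell}$ is log-concave on $\R_{>0}^n$) likely requires either differentiating $\log p_\mu^{1/\ell}$ in $\ell$-fold disjoint variable groups and rewriting the resulting Hessian inequality combinatorially, or invoking the $\ell$-step-down walk analysis of \cite{AJKPV21}, where $\alpha$-FLC becomes a statement about an ordinary one-step-down walk on $\ell$-subsets.

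Granting this inequality, I would follow the template of \cite{AD20}: estimate $\Pr[S\subseteq V]$ for each $k$-subset $S$ by summing over partitions of $S$ into $\ell$-subsets and applying the negative correlation bound; define $\lambda_i$ (for $i\in V$) in terms of empirical frequencies of sampled $T_j$'s containing $i$, with a natural candidate being $\lambda_i=\bigl(\sum_{j:\,T_j\ni i}w_{T_j}\bigr)^{-1/\ell}$ for suitable weights $w_T$; and verify TV closeness by a direct calculation in which the chosen $\lambda$ cancels the $\ell$-th order inclusion probabilities. A secondary obstacle is that $\lambda$ is random, so its effect on $\lambda\star\mu$ must be averaged; concentration of the sums $\sum_j\mathbf{1}[T_j\ni i]$ over the i.i.d.\ sampled $T_j$'s should follow from standard Chernoff arguments once the negative correlation inequality is established. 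An alternative backup, if a pointwise inequality proves elusive, is to lift $\mu$ to a measure $\mu^\uparrow$ on $(k/\ell)$-subsets of $\binom{[n]}{\ell}$, apply the $\alpha=1$ sparsification directly to $\mu^\uparrow$, and pull back along the union map — but this reintroduces the challenge of showing $\mu^\uparrow$ is itself log-concave from $\alpha$-FLC of $\mu$, which appears no easier than the inequality above.
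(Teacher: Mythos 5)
The statement you are trying to prove is labeled a \emph{conjecture} in the paper and has no proof there; in fact, the paper presents a lower bound (\cref{sec:lowerbound}) giving evidence that it is hard. Your write-up is therefore a research plan, not a verifiable proof, and it leaves the essential step explicitly open: you state the $\ell$-th order negative-correlation inequality as the ``principal obstacle,'' say it ``likely requires'' some techniques, and then proceed only ``granting this inequality.'' Nothing in the proposal actually establishes it, and the remaining steps (defining $\lambda$, ``verify TV closeness by a direct calculation'') are described only in outline.

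The deeper problem is that you have not engaged with why the paper itself suggests the conjecture is delicate. The Reed--Solomon construction in \cref{sec:lowerbound} is $(1/\alpha)$-entropically independent with $\alpha=\Omega(1)$ and has order-$(d+1)\simeq 1/\alpha$ marginals that carry \emph{no information} about the identity of $\mu$ (they are uniform on independent sets of a partition matroid regardless of the hidden permutations), so no intermediate-sampling scheme driven by those marginals can beat $t=\tilde\Omega(n^{1-\alpha})$. That distribution is \emph{not} fractionally log-concave, which is exactly why it does not contradict the conjecture; but any plausible proof must pinpoint how FLC is used beyond its entropic-independence consequence, i.e., must fail on the Reed--Solomon example. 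Your proposed negative-correlation inequality at the $\ell$-subset level is not obviously violated by that example (for Reed--Solomon, $\Pr[T_1\cup\dots\cup T_r\subseteq S]$ is in fact extremely small, since distinct $(d+1)$-subsets must determine the same polynomial), so even if you proved it you would not have excluded a counterexample. Separately, your description of FLC as ``$p_\mu(x)^{1/\ell}$ is log-concave'' is not the definition used in the paper: $\alpha$-FLC means $\log g_\mu(z_1^\alpha,\dots,z_n^\alpha)$ is concave on the positive orthant, which is a different condition from $g_\mu^{\alpha}$ being log-concave; any attempted derivation of the $\ell$-fold correlation inequality has to start from the correct object. Finally, your ``backup'' route of lifting to $\binom{[n]}{\ell}$ runs into the same informational obstruction: for the Reed--Solomon example the order-$(d+1)$ marginals fail to determine $\mu$, so no lifted measure built from them could sparsify correctly, independent of whether $\mu^\uparrow$ is log-concave.
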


Despite the attractiveness of a bound independent of $n$, we give evidence that obtaining these domain sparsification schemes requires entirely new ideas; we show in \cref{sec:lowerbound} that if we replace fractional log-concavity by entropic independence (which is sufficient for our main result, \cref{thm:main}) in the above conjecture, the conjecture becomes false.

\subsection{Applications}\label{subsec:applications}

Here we mention examples of distributions to which our results can be applied beyond those covered by prior work of \textcite{AD20}. Our examples satisfy fractional log-concavity \cite{AASV21} which entails both entropic independence \cite{AJKPV21}, and the existence of the base sampling algorithm $\mathcal{A}$.

For a distribution $\mu:\binom{[n]}{k}\to\R_{\geq 0}$ we define the generating polynomial $g_\mu$ to be
\[ g_\mu(z_1,\dots,z_n):=\sum_{S}\mu(S)\prod_{i\in S}z_i. \]
We say that the distribution $\mu$ or the polynomial $g_\mu$ is $\alpha$-fractionally-log-concave for some parameter $\alpha\leq 1$ if $\log g_\mu(z_1^\alpha,\dots,z_n^\alpha)$ is concave  as a function over the positive orthant $(z_1,\dots,z_n)\in \R_{\geq 0}^n$.

Notice that any multi-affine homogeneous polynomial with nonnegative coefficients is the generating polynomial of a distribution $\mu$. Throughout the paper, we often equate these polynomials with the distributions they represent, and freely talk about fractional log-concavity of either the generating polynomial or the distribution. For more details, see \cite{AASV21}.

\begin{example}
\par If $g$ is a degree-$k$ homogeneous multi-affine polynomial, then it is $\frac{1}{k}$-log-concave. Every monomial $\prod_{i\in S} z_i^{1/k}$ is concave, since by H\"older's inequality
\[\prod_{i\in S} (\lambda z_i + (1-\lambda) y_i) \geq \parens*{\lambda  \prod_{i\in S} z_i^{1/k} + (1-\lambda)\prod_{i\in S} y_i^{1/k}}^k.\] 
Now, $g(z_1^{1/k}, \dots, z_n^{1/k})$ is concave as (weighted)-sum of concave functions $\prod_{i\in S} z_i^{1/k}$. Thus is also log-concave, as $\log$ is a monotone and concave function.
\end{example}

\begin{example}\label{ex:blowup}
We present another toy class of $\alpha$-fractionally-log-concave polynomials that provides some intuition despite not having many applications. Let $\mu$ be an $\alpha$-fractionally-log-concave polynomial over the variables $z_1, \ldots, z_n$. If we replace each $z_i$ with the monomial $\prod_{j = 1}^m z_i^{(j)}$, we obtain a degree $mk$, $\frac{\alpha}{m}$-fractionally-log-concave polynomial over the variables $\set{z_i^{(j)}\given i\in [n], j\in [m]}$ \cite{AASV21}. For example, if the starting distribution $\mu$ is the uniform distribution over bases of a matroid, then $\alpha=1$, and the resulting distribution will be $1/m$-fractionally log-concave.

Notice that if we normalize this blown-up polynomial to convert it into to a distribution, for any $i \in [n]$, the elements $i^{(1)}, \ldots, i^{(m)}$ are all perfectly correlated. On the other hand, if $i \neq j$, any two elements $i^{(m_i)}, j^{(m_j)}$ inherit the correlations from the log-concave distribution. 
\end{example}

\begin{example} \label{ex:matching}
Let $G$ be a graph and $k\in \N$. For each set $S \subseteq \binom{V}{2k}$, set $\mu(S)$ to be proportional to the number of perfect matchings on $S$. Sampling from $\mu$ allow us to approximately count the number of $k$-matching, i.e., matchings using $k$ edges. \cite{AASV21} proved that for any value of $k$, this distribution is fractionally log-concave with $\alpha\geq 1/4$.

Not all choices of $G$ result in efficient sampling algorithms. The implementation of each iteration of the Markov chain involves counting perfect matchings over $S \subseteq V$, and we do not have a $\poly(k)$ time algorithm for counting matchings in general graphs. We thus only consider downward closed graph families with an FPRAS for counting perfect matchings, e.g., bipartite graphs \cite{JSV04}, planar graphs \cite{Kas67}, certain minor-free graphs \cite{EV19}, and small genus graphs \cite{GL99}. Our main results imply that as long as we estimate the probability of every vertex being part of a random $k$-matching, we can reduce the task of sampling $k$-matchings on an $n$ vertex graph to graphs with only $n^{3/4}\cdot \poly(k)$ many vertices.
\end{example}

\begin{example} \label{ex:DPP}
Let $L$ be a nonsymmetric positive semidefinite matrix, i.e., an $n\times n$ matrix $L$ that satisfies $L+L^\intercal\succeq 0$. Then, the nonsymmetric $k$-determinantal point process \cite[see, e.g.,][]{GBDK19,GHDGB20,AV21} with kernel $L$, defined by 
\[\mu(S) = \det(L_{S,S})\]
for all $S\in \binom{[n]}{k}$ is fractionally log-concave with $\alpha\geq 1/4$ \cite{AASV21}.
\end{example}

\begin{example}
Suppose that we start with a measure $\mu_0$ on $\binom{[n]}{k}$ that is Strongly Rayleigh \cite[see][for definition]{BBL09}, such as a (symmetric) determinant point process, or the uniform distribution over spanning trees of a graph. Suppose that we partition the ground set into a constant number $c=O(1)$ of parts: $[n]=A_1\cup A_2\cup \cdot A_c$, and fix cardinalities $k_1,\dots, k_c\in \Z_{\geq 0}$, with $k_1+\dots+k_c=k$. Then the partition-constrained version of $\mu_0$ can be defined as
\[ \mu(S)\propto \mu_0(S)\cdot \1\bracks*{\card{S\cap A_i}=k_i\text{ for }i=1,\dots,c}. \]
As long as $c=O(1)$, this distribution $\mu$ will be $\Omega(1)$-fractionally-log-concave \cite{AASV21}. For some discussion of partition-constrained Strongly Rayleigh measures, see \cite{CDKSV16}.
\end{example}

\subsection{Related Work}\label{subsec:relatedwork}

\subsubsection*{Log-Concavity}

Log-concavity has been a well-studied concept in continuous sampling since it captures many common distributions like uniform distributions over convex bodies, and Gaussian distributions. Discrete notions of log-concavity we work with in this paper have been introduced by \cite{Gur09,AOV18,BH19,ALOV19}. The formulation of \cite{ALOV19} is what we refer to in this paper as ``log-concave,'' and is motivated by examples such as the uniform distribution over bases of a matroid and the special subcase of the uniform distribution over spanning trees. 

We have a nearly complete picture for MCMC-based sampling algorithms for homogeneous log-concave distributions. For degree-$k$ distributions, \textcite{ALOV19} analyzed the down-up walk, which occurs between sets of size $k$ and sets of size $(k - 1)$; in the case of matroid bases, this walk is also known as a form of the ``basis exchange walk.'' Furthermore, \textcite{CGM19} proved a Modified Log-Sobolev Inequality (MLSI) for this walk, and \textcite{ALOVV20} further reduced the runtime of sampling by analyzing a warm start to the down-up walk algorithm. Most recently, \textcite{AD20} devised an algorithm for sampling a log-concave distribution when we are given the single-element marginals; we will elaborate upon their contributions more in \cref{subsec:advantage}, where we compare their algorithm to ours. 

\subsubsection*{Intermediate Sampling and Determinantal Point Processes}

A class of domain sparsification algorithms, related to the algorithms we used here, called intermediate sampling was first proposed by \cite{DWH18,Der19} in the context of sampling from Determinantal Point Processes (DPPs, \cite{DM21}), also known as Volume Sampling \cite{DRVW06,DR10,GS12}. DPPs are a family of distributions (a small, but important, subset of distributions with log-concave generating polynomials) which arise for instance when sampling random spanning trees \cite{Gue83}, as well as in randomized linear algebra \cite{DW17,DCMW19}, machine learning \cite{KT11,KT12,DKM20}, optimization \cite{Nik19,DBPM20,MDK20}, and other areas \cite{Mac75,HKPV06,BLMV17}.

The complexity of intermediate sampling for DPPs was further improved by \cite{DCV19,CDV20}, and the approach was extended to DPPs over continuous domains by \cite{DWH19}. Crucially, these algorithms take advantage of the additional structure in DPPs, to enable \emph{distortion-free} intermediate sampling: instead of using a Markov chain, this uses rejection sampling to draw \emph{exactly} from the target distribution. This approach is not possible more generally, since $\mu$ typically does not have a tractable partition function. However, \cite{AD20} showed that the original analysis of distortion-free intermediate sampling can largely be retained for distributions with log-concave generating polynomials, as long as we switch to a Markov chain implementation. 

On the other hand, in this work, we largely abandon the original analysis in favor of a new one which is specific to the Markov chain and requires less precision in marginal estimates. As a result, we show that the preprocessing cost for Markov chain intermediate sampling is substantially smaller than for distortion-free intermediate sampling. This leads to significant improvements in time complexity even for DPPs, e.g., by reducing the preprocessing cost in \cite{DCV19} from $\tilde O(nk^6+k^9)$ to $\tilde O(nk^2+k^3)$, where $\tilde O$ hides polylogarithmic terms.

\subsection{Overview of Techniques}\label{subsec:outline}

Given an entropically independent distribution $\mu: \binom{[n]}{k}\to\R_{\geq 0}$, we first preprocess it using \emph{isotropic transformation}, which is further detailed in \cref{subsec:isotropic}. This converts our distribution $\mu$ into a related distribution $\mu'$ whose single element marginals $\P_{S' \sim \mu'}{i \in S'}$ are approximately uniform. We prove various properties of $\mu'$ in  \cref{prop:near-isotropic}, including the fact that $\mu'$ has ground set size linear in $n$. This preprocessing step may be of independent interest for other discrete sampling problems outside of entropically independent and fractionally log-concave distributions.

After this step, we may assume our distribution $\mu$ has already undergone isotropic transformation. Next, we design a Markov chain $M_\mu^t$ that has $\mu$ as its stationary distribution, where taking a step requires sampling from a sparsified distribution $\nu$. We refer to this algorithm as \emph{Markov Chain Intermediate Sampling}. The benefit of this Markov chain over other natural Markov chains (e.g., down-up random walks \cite[see, e.g.,][]{AASV21}) is that each step requires paying attention only to a subset of elements as opposed to all. We show that $n^{1-\alpha}\cdot \poly(k)$ size is sufficient to ensure that $M_\mu^t$ mixes rapidly. Specifically, in \cref{lem:subsample lower bound}, we prove that a single step of $M_\mu^t$ from any $S \in \supp(\mu)$ satisfies $\norm{P(S, \cdot) - \mu} \leq \frac{1}{4}$. 

The mixing time analysis of $M_\mu^t$ is novel and improves upon the methods used in \cite{AD20}. The improvement is discussed and concretely illustrated with an example distribution in  \cref{subsec:advantage}. The proof of \cref{lem:subsample lower bound} relies heavily on new negative dependence inequalities that are ``average-case'' rather than ``worst-case'', since the worst-case inequalities simply do not hold for fractionally-log-concave distributions. We show that these ``average-case'' inequalities suffice for fast mixing of $M_\mu^t$, thus opening the intermediate sampling framework to wider families of distributions. 

\Tag<noreview>{
\subsection{Acknowledgements}

Nima Anari and Thuy-Duong Vuong are supported by NSF CAREER award CCF-2045354. Elizabeth Yang is supported by the NSF Graduate Research Fellowship under Grant No.\ DGE 1752814.
}
	
	\section{Preliminaries}\label{sec:prelims}

We use $[n]$ to denote the set $\set{1,\dots,n}$. For a set $S$, we use $\binom{S}{k}$ to denote the family of subsets of $S$ of size $k$. For a distribution $\mu$, we use $X\sim \mu$ to denote that $X$ is a random variable distributed according to $\mu$. For a set $U$, we abuse notation and let $X\sim U$ denote $X$ following the \emph{uniform} distribution over $U$.

For a distribution $\mu$ over size $k$ sets and a set $T$ of size potentially larger than $k$, we abuse notation and use $\mu(T)$ to denote:
\[ \mu(T):=\sum_{S\subseteq T}\mu(S). \]

\subsection{Markov Chains and Mixing Time}
\begin{definition}
Let $\mu, \nu$ be two discrete probability distributions over the same event space $\Omega$. The \emph{total variation distance}, or \emph{$\TV$-distance}, between $\mu$ and $\nu$ is given by
$$
\norm{\mu - \nu}_{\TV} = \frac{1}{2} \sum_{\omega \in \Omega} |\mu(\omega) - \nu(\omega)|
$$
\end{definition}

\begin{definition}
Let $P$ be an ergodic Markov chain on a finite state space $\Omega$ and let $\mu$ denote its (unique) stationary distribution. For any probability distribution $\nu$ on $\Omega$ and $\epsilon \in (0,1)$, we define
\[t_\mix(P, \nu, \epsilon) = \min\set{t\geq 0 \mid \norm{\nu P^{t}- \mu}_\TV \leq \epsilon},\]
and
\[t_\mix(P,\epsilon) = \max\set*{t_\mix(P, \1_x, \epsilon) \given x\in \Omega},\]
where $\1_{x}$ is the point mass distribution supported on $x$. 
\end{definition}

We will drop $P$ and $\nu$ if they are clear from context. Moreover, if we do not specify $\epsilon$, then it is set to $1/4$. This is because the growth of $t_{\operatorname{mix}}(P,\epsilon)$ is at most logarithmic in $1/\epsilon$ (cf.~\cite{LP17}). 

The modified log-Sobolev constant of a Markov chain, defined next, provides control on its mixing time. For a detailed coverage see \cite{LP17}.

\begin{definition}
Let $P$ denote the transition matrix of an ergodic, reversible Markov chain on $\Omega$ with stationary distribution $\mu$.

\begin{itemize}
    \item The \emph{Dirichlet form} of $P$ is defined for $f,g \in \Omega \to \mathbb{R}$ by
    \[\mathcal{E}_P(f,g) = \langle f, (I-P)g \rangle_{\mu} = \langle (I-P)f, g \rangle_{\mu}.\]
    \item The \emph{modified log-Sobolev constant} of $P$ is defined to be
\[\rho_0(P) = \inf\left\{\frac{\mathcal{E}_P(f, \log f)}{2\cdot \text{Ent}_{\mu}[f]} : f \colon \Omega \to \mathbb{R}_{\geq 0}, \text{Ent}_{\mu}[f]\neq 0\right\},\]
where 
\[\text{Ent}_{\mu}[f] = \E_{\mu}{f\log f} - \E_{\mu}{f}\log\E_{\mu}{f}.\]
Note that, by rescaling, the infimum may be restricted to functions $f\colon \Omega \to \mathbb{R}_{\geq 0}$ satisfying $\text{Ent}_{\mu}[f]\neq 0$ and $\E_{\mu}{f} = 1$.
\end{itemize}
\end{definition}
The relationship between the modified log-Sobolev constant and mixing times is captured by the following well-known lemma. 

\begin{lemma}[cf.~\cite{bobkov2006modified}]
Let $P$ denote the transition matrix of an ergodic, reversible Markov chain on $\Omega$ with stationary distribution $\mu$ and let $\rho_0(P)$ denote its modified log-Sobolev constant. Then, for any probability distribution $\nu$ on $\Omega$ and for any $\epsilon \in (0,1)$
\[t_{\operatorname{mix}}(P, \nu, \epsilon) \leq \bigg\lceil \rho_0(P)^{-1}\cdot \left(\max_{x\in \Omega}\log\log\left(\frac{\nu(x)}{\mu(x)}\right) + \log\left(\frac{1}{2\epsilon^2}\right) \right) \bigg\rceil.\]
In particular,
\[t_{\operatorname{mix}}(P,\epsilon) \leq  \bigg\lceil \rho_0(P)^{-1}\cdot \left(\log\log\left(\frac{1}{\min_{x\in \Omega}\mu(x)}\right) + \log\left(\frac{1}{2\epsilon^2}\right) \right) \bigg\rceil.\]
\end{lemma}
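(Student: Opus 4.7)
The plan is to combine three standard ingredients: Pinsker's inequality to reduce total variation to relative entropy, the exponential entropy decay implied by the modified log-Sobolev inequality, and a crude upper bound on the initial entropy that accounts for the $\log\log$ factor. Set $f_t := (\nu P^t)/\mu$, so that $\E_\mu[f_t] = 1$ and $\text{Ent}_\mu[f_t] = \DKL(\nu P^t \,\|\, \mu)$. Pinsker's inequality gives $\|\nu P^t - \mu\|_\TV \le \sqrt{\tfrac{1}{2}\text{Ent}_\mu[f_t]}$, so it suffices to show that $\text{Ent}_\mu[f_t] \le 2\epsilon^2$ once $t$ exceeds the threshold in the statement.

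For the initial entropy, since $f_0(x) \le M := \max_y \nu(y)/\mu(y)$ and $\E_\mu[f_0] = 1$, a one-line computation gives $\text{Ent}_\mu[f_0] = \E_\mu[f_0\log f_0] \le \log M$. Applying $\log$ once more to this bound is precisely where the iterated logarithm $\log\log(\nu(x)/\mu(x))$ in the lemma statement will appear. For the second display, specializing to $\nu = \1_x$ gives $M = 1/\mu(x)$, and taking the worst $x$ yields $M \le 1/\min_y\mu(y)$.

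The heart of the argument is turning the defining inequality $\mathcal{E}_P(f,\log f) \ge 2\rho_0(P)\,\text{Ent}_\mu[f]$ into exponential contraction of $\text{Ent}_\mu[f_t]$. The cleanest route, for reversible chains, is to pass through the associated continuous-time heat semigroup $P^{\mathrm{ct}}_s := e^{-s(I-P)}$: one has the differential identity $\frac{d}{ds}\text{Ent}_\mu[P^{\mathrm{ct}}_s f_0] = -\mathcal{E}_P(P^{\mathrm{ct}}_s f_0,\log P^{\mathrm{ct}}_s f_0)$, which combined with the MLSI and Grönwall gives $\text{Ent}_\mu[P^{\mathrm{ct}}_s f_0] \le e^{-c\rho_0(P)\,s}\,\text{Ent}_\mu[f_0]$; a standard reversibility-based comparison then transfers this decay to the discrete iterates $f_t$. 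Combining with the initial-entropy bound, the inequality $e^{-c\rho_0 t}\log M \le 2\epsilon^2$ can be rearranged to $t \ge c^{-1}\rho_0^{-1}\bigl(\log\log M + \log(1/(2\epsilon^2))\bigr)$, which matches the claimed bound after ceiling.

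The main obstacle is Step 3: the entropy-decay statement in discrete time is not given by a literal differential identity, so some care is needed either (i) to pass cleanly through the continuous-time semigroup, or (ii) to establish the one-step contraction $\text{Ent}_\mu[f_{t+1}] \le (1-c\rho_0)\,\text{Ent}_\mu[f_t]$ directly using reversibility and the Dirichlet form identity $\mathcal{E}_P(f,g) = \langle f,(I-P)g\rangle_\mu$. Tracking the precise constant $c$ so that it agrees with the paper's normalization of $\rho_0(P)$ (which carries the factor $2$ in its denominator) is bookkeeping; once the one-step contraction is in hand, the proof is a routine telescoping and a logarithm.
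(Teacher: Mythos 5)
The paper does not prove this lemma---it is stated with ``cf.~\cite{bobkov2006modified}'' and used only as background; the actual mixing-time argument in the paper (for $M^t_\mu$) goes through \cref{thm:mixing} instead. So I am evaluating your outline against the standard proof in Bobkov--Tetali rather than against an argument in this manuscript.

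Your three ingredients are the right ones and Steps 1--2 are correct as written: Pinsker gives $\|\nu P^t - \mu\|_{\TV} \le \sqrt{\tfrac12 \operatorname{Ent}_\mu[f_t]}$, and $\operatorname{Ent}_\mu[f_0] \le \log M$ with $M = \max_x \nu(x)/\mu(x)$ is exactly the elementary bound that produces the $\log\log$. But the obstacle you flag in Step 3 is a genuine gap, not bookkeeping, and you should not wave it off. The MLSI for $P$ gives exponential entropy decay along the \emph{continuous-time} semigroup $e^{-s(I-P)}$ via the differential identity plus Gr\"onwall, but there is no ``standard reversibility-based comparison'' that transfers this to the discrete iterates $P^t$ in general. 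Indeed, for a periodic reversible chain the MLSI constant can be strictly positive while the discrete-time chain never mixes (e.g.\ the deterministic $2$-cycle); ergodicity excludes this, but aperiodicity alone does not yield the one-step contraction $\operatorname{Ent}_\mu[f_{t+1}] \le (1 - c\rho_0)\operatorname{Ent}_\mu[f_t]$ with a constant depending only on $\rho_0$. What actually works is to exploit some quantitative control on the spectrum: for lazy chains ($P(x,x)\ge 1/2$) one compares $P$ to $(I+P)/2$, and for positive semidefinite reversible chains one factors $P = Q^2$ with $Q = P^{1/2}$ and runs the entropy argument on $Q$. The chains relevant to this paper (up-down and down-up walks, including $M^t_\mu$) are PSD, so that route is available and is the one Bobkov--Tetali-style arguments use in this setting. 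Relatedly, with the paper's normalization $\rho_0 = \inf \mathcal E_P(f,\log f)/(2\operatorname{Ent}_\mu[f])$, the continuous-time decay rate is $e^{-2\rho_0 s}$, which naively gives $\tfrac{1}{2\rho_0}$ in the bound rather than $\rho_0^{-1}$; the extra factor of $2$ is precisely what a discrete-time conversion typically costs, so the constant you left as $c$ is not free---you should verify that $c$ comes out to $1$ under whichever discrete-time route you choose.
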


%\june{add modified log soblev and mixing time bound}
\begin{theorem}[cf. \cite{LP17}] \label{thm:mixing}
If an irreducible aperiodic Markov chain with stationary distribution $\mu$ and transition matrix $P$ satisfies 
$\norm{P^t (S, \cdot) - \mu}_{TV}  \leq 1/4$ for all $S \in \supp(\mu)$ and some $t \geq 1$, then for any $\epsilon \in (0, 1/4],$
\[t_{\text{mix}}(P, \epsilon) \leq t \log(1/\epsilon).\]
\end{theorem}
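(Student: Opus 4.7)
The plan is to use submultiplicativity of total variation distance under iterated application of the Markov operator $P^t$. Specifically, if $k$ rounds of $P^t$ contract the TV distance to $\mu$ by a factor of $2^{-k}$, then taking $k = \lceil \log_2(1/\epsilon)\rceil$ drives it below $\epsilon$, giving $t_{\text{mix}}(P, \epsilon) \leq kt \leq t\log(1/\epsilon)$ up to the base of the logarithm absorbed into the convention.

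To implement this, I first introduce the auxiliary quantity
\[\bar d(s) := \max_{x,y\in\Omega}\norm{P^s(x,\cdot) - P^s(y,\cdot)}_{TV}.\]
Writing $\mu = \sum_y \mu(y)\, P^s(y,\cdot)$ by stationarity and applying the triangle inequality yields $d(s) \leq \bar d(s) \leq 2 d(s)$, where $d(s) := \max_{x\in\Omega} \norm{P^s(x,\cdot) - \mu}_{TV}$. Since the chain is irreducible, $\supp(\mu) = \Omega$, so the hypothesis gives $d(t)\leq 1/4$ and hence $\bar d(t) \leq 1/2$.

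The main step is to establish the submultiplicativity $\bar d(s_1+s_2)\leq \bar d(s_1)\cdot\bar d(s_2)$. My approach is to decompose the signed measure $P^{s_1}(x,\cdot) - P^{s_1}(y,\cdot)$ into its positive and negative parts, each carrying total mass $\alpha := \norm{P^{s_1}(x,\cdot) - P^{s_1}(y,\cdot)}_{TV}$. Normalizing gives two probability distributions $\eta_+,\eta_-$ on $\Omega$, and then
\[P^{s_1+s_2}(x,\cdot) - P^{s_1+s_2}(y,\cdot) = \alpha\bigl(\eta_+ P^{s_2} - \eta_- P^{s_2}\bigr),\]
so bounding the TV distance pointwise over $(u,v)$-coordinates and averaging yields $\norm{P^{s_1+s_2}(x,\cdot) - P^{s_1+s_2}(y,\cdot)}_{TV} \leq \alpha\cdot \bar d(s_2)$. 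Maximizing over $x,y$ completes the step. An equivalent grand-coupling proof, in which the chains started from $x$ and $y$ are coupled to attempt coalescence by time $s_1$ and then independently for another $s_2$ steps, works as well.

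Iterating, $\bar d(kt) \leq \bar d(t)^k \leq 2^{-k}$, hence $d(kt) \leq \bar d(kt) \leq 2^{-k}$, and choosing $k = \lceil \log_2(1/\epsilon)\rceil$ delivers the stated bound. The sole obstacle is establishing the submultiplicativity cleanly, which is precisely what the positive/negative-part decomposition above is engineered to handle; the rest is routine bookkeeping. No reversibility is required, matching the hypotheses of the theorem.
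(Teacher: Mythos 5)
Your proposal is correct and is precisely the textbook argument the paper is silently invoking by citing \cite{LP17}: the paper gives no proof of its own, so there is nothing to compare against. The chain of ideas — the two-sided comparison $d(s)\leq\bar d(s)\leq 2d(s)$, the submultiplicativity $\bar d(s_1+s_2)\leq\bar d(s_1)\bar d(s_2)$ via a decomposition of the signed measure and a coupling (here the product coupling $\eta_+\otimes\eta_-$, equivalently your grand-coupling remark), and the conclusion that $\bar d(t)\leq 1/2$ forces $d(kt)\leq 2^{-k}$ — is exactly Lemmas 4.10--4.12 and the exercise that follows in Levin--Peres. Your observation that irreducibility gives $\supp(\mu)=\Omega$, so the hypothesis over $S\in\supp(\mu)$ really controls $d(t)$, is a necessary and correctly placed step that a hasty reader might skip.

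One small thing you already half-acknowledge but should not brush aside: the argument actually yields $t_{\mathrm{mix}}(\epsilon)\leq t\lceil\log_2(1/\epsilon)\rceil$. With natural logarithm this exceeds $t\log(1/\epsilon)$ for every $\epsilon\leq 1/4$, so the literal inequality as printed requires reading $\log$ as $\log_2$ (or allowing the constant to be absorbed, which is how the paper uses it downstream). It is worth stating this explicitly rather than gesturing at ``the convention,'' since otherwise the bound as written is false for, say, $\epsilon=1/4$, where $\lceil\log_2 4\rceil=2>\ln 4\approx 1.39$.
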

\subsection{Fractional Log-Concavity and Entropic Independence}
We recall the notion of fractional log-concavity \cite{AASV21} and entropic independence \cite{AJKPV21}.
\begin{definition}[{\cite{AASV21}}]\label{def:fractional-log-concavity}
	A probability distribution $\mu: \binom{[n]}{k} \to \mathbb{R}_{\geq 0}$ is $\alpha$\emph{-fractionally-log-concave} if $g_\mu(z_1^{\alpha}, \ldots, z_n^{\alpha})$ is log-concave for $z_1, \dots, z_n \in \R_{\geq 0}^n.$
	If $\alpha = 1$, we say $\mu$ is log-concave.%Log-concavity is a special case, corresponding to $\alpha = 1$. 
\end{definition}

To define entropic independence we need the definition of the ``down'' operator.
\begin{definition}[Down Operator]
	For $\l\leq k$ define the row-stochastic matrix $D_{k\to \l}\in \R_{\geq 0}^{\binom{[n]}{k}\times \binom{[n]}{\l}}$ by
	\[ D_{k\to \l}(S, T)=\begin{cases}
		0 & \text{if }T\not\subseteq S\\ 
		\frac{1}{\binom{k}{\l}}& \text{otherwise}.
	\end{cases}\]
	Note that for a distribution $\mu$ on size $k$ sets, $\mu D_{k\to \l}$ will be a distribution on size $\l$ sets. In particular, $\mu D_{k\to 1}$ will be the vector of normalized marginals of $\mu$: $(\P{i\in S}/k)_{i\in [n]}$.
\end{definition}

\begin{definition}[{\cite[Definition 2, Theorem 3]{AJKPV21}}]\label{def:ent-ind}
	A probability distribution $\mu: \binom{[n]}{k} \to \mathbb{R}_{\geq 0}$ is $(1/\alpha)$\emph{-entropically-independent} for $\alpha \in (0,1]$, if for all probability distributions $\nu$ on $\binom{[n]}{k}$,
\[ \DKL{\nu D_{k\to 1} \river \mu D_{k\to 1}}\leq \frac{1}{\alpha k}\DKL{\nu \river \mu}.  \]
Or equivalently,
\begin{equation} \label{eq:entropic ind tangent}
    \forall (z_1,\dots, z_n) \in \R^{n}_{\geq 0}: g_{\mu}(z_1^\alpha,\dots, z_n^\alpha)^{1/k\alpha} \leq \sum_{i=1}^{n}p_i z_{i},
\end{equation}
where $p = (p_1,\dots, p_n) := \mu D_{k \to 1}.$ %are the marginals of $\mu.$ 
\end{definition}
We note that $\alpha$-fractionally log concavity implies $(1/\alpha)$-entropic independence \cite[Theorem~3]{AJKPV21}.

An important part of our sparsification scheme is a process to transform distributions into a ``near-isotropic'' position (defined as having roughly equal marginals) by subdividing the elements of the ground set. More precisely, 
let $\mu$ be a distribution generated by $g_{\mu}(z_1, \dots, z_n),$ then the distribution $\mu'$ obtained by subdividing $z_i$ into $t_i$ copies has generating polynomial
\[g_{\mu'}(z_1^{(1)}, \ldots, z_n^{(t_n)}) = g_\mu \left(\frac{z_1^{(1)} + \ldots + z_1^{(t_1)}}{t_1}, \ldots, \frac{z_n^{(1)} + \ldots + z_n^{(t_n)}}{t_n} \right).\]
 %The following proposition will be useful for \cref{thm:main}.

Subdivision preserves both entropic independence and fractional log-concavity.

\begin{proposition} \label{prop:entropic-ind-subdivision}
If $\mu$ is $(1/\alpha)$-entropically-independent distribution then $\mu'$ is also $(1/\alpha)$-entropic independence. 
\end{proposition}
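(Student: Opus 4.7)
The plan is to verify the polynomial/tangent form of entropic independence stated in \eqref{eq:entropic ind tangent} for $\mu'$, leveraging the fact that $g_{\mu'}$ is obtained from $g_\mu$ by substituting each $z_i$ with the arithmetic mean $\frac{1}{t_i}\sum_{j=1}^{t_i} z_i^{(j)}$. The whole argument is a chain of three inequalities: the tangent inequality for $\mu$, a power-mean inequality, and an identification of marginals.

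First I would rewrite
\[
g_{\mu'}\bigl((z_1^{(1)})^\alpha,\dots,(z_n^{(t_n)})^\alpha\bigr) = g_\mu\!\left(\frac{1}{t_1}\sum_{j=1}^{t_1}(z_1^{(j)})^\alpha,\;\dots,\;\frac{1}{t_n}\sum_{j=1}^{t_n}(z_n^{(j)})^\alpha\right).
\]
Setting $u_i := \bigl(\tfrac{1}{t_i}\sum_j (z_i^{(j)})^\alpha\bigr)^{1/\alpha}$ so that the argument of $g_\mu$ becomes $(u_1^\alpha,\dots,u_n^\alpha)$, the entropic independence of $\mu$ (in the form \eqref{eq:entropic ind tangent}) immediately yields
\[
g_{\mu'}\bigl((z_1^{(1)})^\alpha,\dots,(z_n^{(t_n)})^\alpha\bigr)^{1/k\alpha} \;\le\; \sum_{i=1}^{n} p_i\, u_i \;=\; \sum_{i=1}^n p_i\left(\frac{1}{t_i}\sum_{j=1}^{t_i}(z_i^{(j)})^\alpha\right)^{1/\alpha},
\]
where $p = \mu D_{k\to 1}$. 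Because $\alpha\le 1$, the power-mean inequality gives $\bigl(\tfrac{1}{t_i}\sum_j (z_i^{(j)})^\alpha\bigr)^{1/\alpha}\le \tfrac{1}{t_i}\sum_j z_i^{(j)}$ for each $i$, so the right-hand side is bounded above by $\sum_{i,j}\tfrac{p_i}{t_i}\,z_i^{(j)}$.

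It remains to check that $\tfrac{p_i}{t_i}$ is exactly the normalized marginal of $\mu'$ at the element $(i,j)$. Since $g_\mu$ is multi-affine, so is $g_{\mu'}$, and hence $\P_{S'\sim\mu'}[(i,j)\in S'] = \partial_{z_i^{(j)}} g_{\mu'}\big|_{\mathbf z^{(\cdot)}=\mathbf 1}$. By the chain rule this equals $\tfrac{1}{t_i}\partial_{z_i} g_\mu(\mathbf 1) = \tfrac{1}{t_i}\P_{S\sim\mu}[i\in S]$, so $(\mu' D_{k\to 1})_{(i,j)} = \tfrac{1}{kt_i}\P_{\mu}[i\in S] = \tfrac{p_i}{t_i}$. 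Combining the two inequalities above proves the tangent form for $\mu'$, which by \cite[Theorem 3]{AJKPV21} is equivalent to $(1/\alpha)$-entropic independence.

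Nothing here is especially delicate; the only moving part worth noting is that the argument genuinely requires $\alpha\le 1$ (as assumed in Definition~\ref{def:ent-ind}), since it is precisely this regime in which the relevant power mean $M_\alpha \le M_1$ inequality points in the right direction. An analogous substitution-plus-power-mean argument (using log-concavity of $g_\mu(z_1^\alpha,\dots,z_n^\alpha)$ together with the concavity of $t\mapsto t^{1/\alpha}$-type arguments from \cite{AASV21}) would also give the fractional log-concavity version of the same proposition, if desired.
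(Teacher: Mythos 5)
Your proof is correct and follows essentially the same route as the paper: verify the tangent form \eqref{eq:entropic ind tangent} for $\mu'$ by substituting the arithmetic averages into the tangent inequality for $\mu$ and identifying $p_i/t_i$ as the normalized marginals of $\mu'$. The one place where you are more careful than the paper's one-paragraph sketch is the explicit power-mean step $M_\alpha(z_i^{(1)},\dots,z_i^{(t_i)})\le M_1(z_i^{(1)},\dots,z_i^{(t_i)})$ (valid because $\alpha\le 1$), which is genuinely needed to pass from $g_{\mu'}\bigl((z_i^{(j)})^\alpha\bigr)=g_\mu\bigl(\tfrac{1}{t_i}\sum_j(z_i^{(j)})^\alpha,\dots\bigr)$ to something controlled by the tangent inequality at the arithmetic means; the paper simply says ``plug in the average'' and leaves this monotonicity-plus-concavity step implicit.
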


\begin{proposition} \label{prop:flc-subdivision}
If $\mu$ is $\alpha$-fractionally-log-concave distribution then $\mu'$ is also $\alpha$-fractionally-log-concave. 
\end{proposition}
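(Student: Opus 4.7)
My plan is to reduce the statement to two classical facts: the concavity of the power mean of order $\alpha \in (0,1]$, and the composition rule that a concave, coordinate-wise nondecreasing function composed with concave functions is concave.

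First, I would set up notation by writing $F(z_1,\dots,z_n) := \log g_\mu(z_1^\alpha,\dots,z_n^\alpha)$, which is concave on $\R_{\geq 0}^n$ by the assumed $\alpha$-fractional log-concavity of $\mu$. Because $g_\mu$ has nonnegative coefficients, $g_\mu$ is coordinate-wise nondecreasing, hence $F$ is coordinate-wise nondecreasing on the nonnegative orthant. Next, for each ground element $i$, define
\[ s_i\bigl(z_i^{(1)},\dots,z_i^{(t_i)}\bigr) := \left(\frac{(z_i^{(1)})^\alpha + \cdots + (z_i^{(t_i)})^\alpha}{t_i}\right)^{1/\alpha}. \]
Then, directly from the formula for $g_{\mu'}$ given in the excerpt,
\[ g_{\mu'}\bigl((z_1^{(1)})^\alpha,\dots,(z_n^{(t_n)})^\alpha\bigr) = g_\mu\bigl(s_1^\alpha,\dots,s_n^\alpha\bigr), \]
so taking logarithms gives $\log g_{\mu'}\bigl((z_i^{(j)})^\alpha\bigr) = F(s_1,\dots,s_n)$.

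Second, I would invoke (or quickly verify via a standard Hölder argument) that the power mean $M_\alpha$ of order $\alpha \in (0,1]$ is concave on $\R_{\geq 0}^{t_i}$, which shows each $s_i$ is a concave function of its variables. Combining this with the monotone-concave composition rule applied to $F \circ (s_1,\dots,s_n)$ yields concavity of $\log g_{\mu'}\bigl((z_i^{(j)})^\alpha\bigr)$ in the subdivided variables, which is exactly $\alpha$-fractional log-concavity of $\mu'$.

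The only step that requires any real care is the concavity of the power mean $M_\alpha$; the rest is a direct substitution and a textbook composition lemma. I do not foresee any genuine obstacle, but one should note that the monotonicity of $F$ is essential for the composition rule, and that it relies on $\mu$ being a polynomial with nonnegative coefficients rather than on fractional log-concavity itself. If desired, one can alternatively give a self-contained proof of $M_\alpha$'s concavity by writing it as $t_i^{-1/\alpha}\|(z_i^{(1)},\dots,z_i^{(t_i)})\|_\alpha$ and invoking reverse Hölder (or Minkowski's inequality for $0 < \alpha \le 1$) on the nonnegative orthant.
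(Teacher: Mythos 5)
Your proof is correct, and it takes a genuinely different route from the paper's. The paper reduces (by induction) to subdividing a single element, invokes the spectral characterization of $\alpha$-fractional log-concavity via the correlation matrix $\corMat_{\lambda\ast\mu}$ (their \cref{thm:correlation matrix vs FLC}), and then explicitly computes the eigenvectors and eigenvalues of the subdivided distribution's correlation matrix, showing its spectrum is the original spectrum with extra eigenvalues equal to~$1$. Your proof instead works directly from \cref{def:fractional-log-concavity}: you observe that $\log g_{\mu'}\bigl((z_i^{(j)})^\alpha\bigr) = F(s_1,\dots,s_n)$, where $F(z)=\log g_\mu(z^\alpha)$ is concave and coordinate-wise nondecreasing (the latter because $g_\mu$ has nonnegative coefficients), and each $s_i$ is a power mean $M_\alpha$, concave for $\alpha\in(0,1]$; the standard composition rule then finishes it. Each approach has merit: yours is shorter, self-contained, handles all subdivisions at once with no induction, and only relies on the definition plus textbook convexity facts; the paper's gives sharper structural information (the exact spectrum of the subdivided correlation matrix, hence the precise behavior of spectral/entropic independence parameters under subdivision, not just that they don't degrade). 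One small caveat, which you correctly flagged: the monotonicity of $F$ that your composition step needs comes from the nonnegativity of $g_\mu$'s coefficients (a property of generating polynomials of distributions), not from fractional log-concavity itself, so it is worth stating explicitly in a final write-up.
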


We leave the proofs to \cref{sec:proofs}.

	\section{Intermediate Sampling Algorithm}

\subsection{Isotropic Transformation} \label{subsec:isotropic}

 We define, similar to \cite{AD20}, a distribution $\mu$ to be isotropic if for all $i \in [n]$, the marginal probability $\P_{S \sim \mu}{i \in S}$ is $\frac{k}{n}$. We remark that this is only similar in name and spirit, but different in nature, to the analogous notion of isotropy for continuous distributions; the latter is defined based on the covariance matrix of the distribution, while the former is defined based on marginals. In this paper, isotropy captures ``uniformity'' over the elements of $[n]$ in their marginal probabilities. Below we discuss a subdivision process \cite{AD20} that transforms an arbitrary distribution $\mu$ over $\binom{[n]}{k}$ into a distribution $\mu'$ that is nearly-isotropic.
 
 \begin{definition} \label{def:isotropic-transformation}
Let $\mu: \binom{n}{k} \to \mathbb{R}_{\geq 0}$ be an arbitrary probability distribution, and assume that we have estimates $p_1,\dots, p_n$ of the marginals with $p_1+\dots+p_n=k$ and $p_i\geq \Omega(\P_{S\sim \mu}{i\in S})$ for all $i$.
Let $t_i:=\ceil{\frac{n}{k}p_i}$. We will create a new distribution out of $\mu$: For each $i \in [n]$, create $t_i$ copies of the element $i$ and let the collection of all these copies be the new ground set: $U = \bigcup_{i = 1}^{n} \{i^{(1)}, \ldots, i^{(t_i)}\}$. Define the following distribution $\mu': \binom{U}{k} \to \mathbb{R}_{\geq 0}$ from $\mu$:
\[
\mu'\parens*{\set*{i_1^{(j_1)}, \ldots, i_k^{(j_k)}}}:=\frac{\mu(\{i_1, \ldots, i_k\})}{t_1 \cdots t_k}.
\]
We call $\mu'$ the \emph{isotropic transformation} of $\mu$. Another way we can think of $\mu'$ is that to produce a sample from it, we can first generate a sample $\set{i_1, \ldots, i_k}$ from $\mu$, and then choose a copy $i_m^{(j_m)}$ for each element $i_m$ uniformly at random. 
\end{definition}

\begin{remark}
	We note that subdivision or isotropic transformation and external fields behave well together. In particular, a sample from an external field $\lambda$ applied to $\mu'$ can be obtained by first applying an appropriate external field (summing the field values over duplicate elements) to $\mu$ and then replacing each element with a copy of it with probability proportional to $\lambda$. In fact, subdivision is mostly a tool for analysis. In our algorithms, we never have to formally perform subdivision, and we can just sample from distributions defined as $\lambda\star \mu$ for appropriate external fields $\lambda$.
\end{remark}

\begin{remark}
% In order to compute $t_i$, it suffices to have the following approximations $q_i$ to $\P_{S \sim \mu}{i \in S}$:
% \begin{itemize}
%     \item When $\P_{S \sim \mu}{i \in S} > \frac{k}{n}$, we require $\frac{1}{C} \cdot \P_{S \sim \mu}{i \in S} \leq q_i \leq C \cdot \P_{S \sim \mu}{i \in S}$.
    
%     \item When $\P_{S \sim \mu}{i \in S} \leq \frac{k}{n}$, we only require the one-sided bound $q_i \leq C \cdot \P_{S \sim \mu}{i \in S}$. In this case, the lower bound for $t_i$ would simply always be $\lceil \frac{1}{C} \cdot \frac{n}{k} \cdot \P_{S \sim \mu}{i \in S} \rceil = 1$.
% \end{itemize}
To obtain the estimates $\{p_i\}$ for all $i$, we can apply the proof of \cite[Lem.~23]{AD20}, with $\epsilon$ constant, rather than $\epsilon = O(\frac{1}{k})$. This provides a running time reduction for our preprocessing step even in the case of log-concave polynomials.

\end{remark}
There are three desirable properties of $\mu'$ we need to establish for subdivision to be an effective preprocessing step. The first is that subdivision preserves $(1/\alpha)$-entropic independence, which is shown in \cref{prop:entropic-ind-subdivision}. The next is for the marginals $\P_{S \sim \mu'}{i^{(j)} \in S}$ to all be close to $\frac{k}{\card{U}}$ for all $i^{(j)} \in U$; in other words, $\mu'$ is actually close to being isotropic. The last is for $\card{U} \leq O(n)$, so if we ran a sampling algorithm on $\mu'$, the increased size of our ground set does not accidentally inflate our desired asymptotic running times. We remark however, that this last concern can be avoided by simply not running the sampling algorithm on the subdivided distribution, but rather on $\lambda\star \mu$ for an appropriate external field $\lambda$.
\begin{proposition} \label{prop:near-isotropic} 
Let $\mu: \binom{n}{k} \to \mathbb{R}_{\geq 0}$, and let $\mu': \binom{U}{k} \to \R_{\geq 0}$ be the subdivided distribution from \cref{def:isotropic-transformation}. The following hold for $\mu'$:
\begin{enumerate}
\item Near-isotropy: For all $i^{(j)} \in U$, the marginal $\P_{S \sim \mu'}{i^{(j)} \in S} \leq O(k/\card{U})$.
\item Linear ground set size: The number of elements $\card{U} \leq O(n)$. 
\end{enumerate}
\end{proposition}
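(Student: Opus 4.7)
The plan is to verify both items by direct computation, leveraging the ``two-stage'' description of $\mu'$ given at the end of \cref{def:isotropic-transformation}: sample $\{i_1,\dots,i_k\}\sim\mu$ and then, independently for each $m$, pick a copy label $j_m$ uniformly from $[t_{i_m}]$.

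First I would dispatch the linear ground set size. Using the elementary estimate $\lceil x\rceil\leq x+1$ together with the constraint $p_1+\dots+p_n=k$, one gets
\[ |U|=\sum_{i=1}^n t_i=\sum_{i=1}^n\left\lceil \frac{n}{k}p_i\right\rceil\leq \sum_{i=1}^n\left(\frac{n}{k}p_i+1\right)=\frac{n}{k}\cdot k+n=2n. \]
Any $i$ with $p_i=0$ satisfies $\P_{S\sim\mu}{i\in S}=0$ by the lower-bound hypothesis, so it contributes nothing meaningful; we may either set $t_i=0$ and drop it from $U$, or assume without loss of generality that $p_i>0$ so that $t_i\geq 1$ for all $i$.

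For near-isotropy, the two-stage sampling description immediately yields, conditional on $i\in S$, that the copy label is uniform on $[t_i]$, so
\[ \P_{S'\sim\mu'}{i^{(j)}\in S'}=\frac{1}{t_i}\cdot \P_{S\sim\mu}{i\in S}. \]
Combined with the lower bound $t_i\geq \frac{n}{k}p_i\geq \Omega\!\left(\frac{n}{k}\P_{S\sim\mu}{i\in S}\right)$, this gives
\[ \P_{S'\sim\mu'}{i^{(j)}\in S'}\leq \P_{S\sim\mu}{i\in S}\cdot O\!\left(\frac{k}{n\cdot \P_{S\sim\mu}{i\in S}}\right)=O(k/n). \]
Since the first item already establishes $|U|\leq 2n$, we have $k/n\leq 2k/|U|$, so the marginal is $O(k/|U|)$ as required.

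There is no real obstacle here; each step follows immediately from the definition of $t_i$ and the alternative description of $\mu'$. The only subtlety is the cosmetic treatment of the case $p_i=0$, which is handled as above by omitting such elements from the ground set.
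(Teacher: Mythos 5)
Your proof is correct and follows essentially the same route as the paper's: bound $|U|$ via $\lceil x\rceil \le x+1$ and $\sum p_i=k$, and compute the marginal of $i^{(j)}$ under $\mu'$ using the two-stage sampling interpretation together with $t_i\geq \frac{n}{k}p_i\geq\Omega\bigl(\frac{n}{k}\P_{S\sim\mu}{i\in S}\bigr)$. The only addition is your explicit treatment of the degenerate case $p_i=0$, which the paper glosses over but which is handled exactly as you say.
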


\begin{proof}
\par First, we verify that $\card{U}$ is at most $O(n)$:
\[
\card{U} = \sum_{i=1}^n t_i\leq \sum_{i=1}^n\parens*{1+\frac{n}{k}p_i}=n+\frac{n}{k}\sum_{i=1}^n p_i=2n.
\]

Next, we check that for any $i^{(j)}$, the marginal probabilities $\P_{S \sim \mu'}{i^{(j)} \in S}$ are at most $O(k/\card{U})$. Here, we interpret the sampling from $\mu'$ as first sampling from $\mu$, and then choosing a copy for each element.
\begin{align*}
\P_{S \sim \mu'}{i^{(j)} \in S} &= \sum_{S \ni i} \P{\text{we chose copy } j \given \text{we sampled } S \text{ from } \mu} \cdot \P{\text{we sampled } S \text{ from } \mu} \\
&= \sum_{S \ni i} \frac{1}{t_i} \cdot \mu(S) = \frac{1}{t_i} \sum_{S \ni i} \mu(S) = \frac{1}{t_i} \cdot \P_{S \sim \mu}{i \in S}.
\end{align*}

Since $t_i\geq \frac{n}{k}p_i\geq \frac{n}{k}\cdot  \Omega(\P_{S\sim \mu}{i\in S})$, we get that
\[ \P_{S\sim \mu'}{i^{(j)}\in S}\leq O\parens*{\frac{\P_{S\sim \mu}{i\in S}}{\frac{n}{k}\cdot \P_{S\sim \mu}{i\in S}}}=O(k/n)\leq O(k/\card{U}). \qedhere \]
\end{proof}

\subsection{Domain Sparsification via Markov Chain Intermediate Sampling}
Here, we first describe, for any general distributions $\mu$, a Markov chain based on generating intermediate samples $T \subseteq [n]$, that mixes to $\mu$. Then, in \cref{lem:subsample lower bound} and \cref{prop:tv-distance-one-step}, we state our main result that for distributions $\mu$ which are $(1/\alpha)$-entropically independent and nearly-isotropic, the size of $T$ only needs to be $n^{1 -\alpha}\cdot \poly(k)$ for the mixing to occur in one step.

Take distribution $\mu:\binom{[n]}{k} \to \R_{\geq 0}$, and consider the following Markov chain $M^t_{\mu}$ defined for any positive integer
$t$, with the state space $\supp(\mu).$ Starting from $S_0 \in \supp(\mu)$, one step of the chain is given by:

\begin{enumerate}
%\item Start at a set $S_0 \in \binom{[n]}{k}$ such that $\mu(S_0) > 0$. 
    
    \item Sample $T \sim \binom{[n] \setminus S_0}{t - k}.$ %\june{this should be $T \sim \binom{[n] \setminus S_0}{t-k}$ to match with analysis below}
    
    \item Downsample $S_1 \sim \mu_{S_0 \cup T}$, where $\mu_{S_0 \cup T}$ is $\mu$ restricted to $S_0 \cup T$, a.k.a.\ $\1_{S_0\cup T}\star \mu$, and update $S_0$ to be $S_1.$
\end{enumerate}
We note that the requirement $S_0 \in \supp(\mu)$ is not strictly necessary for this step to be defined.

\begin{proposition} \label{prop:fast sampling}
For any distribution $\mu: \binom{[n]}{k}\to \R_{\geq 0}$, the chain $M^t_{\mu}$ for $t \geq 2k$ is irreducible, aperiodic and has stationary distribution $\mu.$
%\june{we can make $t= O(1)$ for $\mu$ that is $\Omega(1)$-fractionally LC, because those has support with edge length = O(1). But this seems too cumbersome, and we have no results for this anyway}
%The outputted set $S_1$ is distributed according to $\mu.$
\end{proposition}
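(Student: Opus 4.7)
The plan is to verify stationarity via detailed balance, and then show irreducibility and aperiodicity essentially by inspection, using the bound $t \geq 2k$ to guarantee that any target state is reachable in a single step.

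For the stationarity step, I would first write out the one-step transition probability explicitly. Given $S_0 \in \supp(\mu)$, marginalizing over the intermediate set $U = S_0 \cup T$ (which has size $t$, contains $S_0$, and is uniform among the $\binom{n-k}{t-k}$ supersets of $S_0$ of size $t$) gives
\[
P(S_0, S_1) \;=\; \sum_{\substack{U \supseteq S_0 \\ |U|=t}} \frac{1}{\binom{n-k}{t-k}} \cdot \frac{\mu(S_1)\,\mathbf{1}[S_1 \subseteq U]}{\mu(U)},
\]
where $\mu(U) = \sum_{S \subseteq U,\, |S|=k} \mu(S)$ as per the earlier notational convention. Multiplying by $\mu(S_0)$ and rewriting the indicator as $\mathbf{1}[S_0 \cup S_1 \subseteq U]$ yields
\[
\mu(S_0)\, P(S_0, S_1) \;=\; \mu(S_0)\,\mu(S_1)\!\!\sum_{\substack{U \supseteq S_0 \cup S_1 \\ |U|=t}} \!\!\frac{1}{\binom{n-k}{t-k}\,\mu(U)},
\]
which is manifestly symmetric in $S_0$ and $S_1$ (the summation range and summands depend only on $U$, and $\binom{n-k}{t-k}$ depends only on $|S_0|=|S_1|=k$). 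Hence detailed balance holds with respect to $\mu$, and since $\mu$ is a probability distribution on $\supp(\mu)$, it is the (unique) stationary distribution once ergodicity is established.

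For irreducibility, I would show that any $S_1 \in \supp(\mu)$ is reachable from any $S_0 \in \supp(\mu)$ in a single step. Since $|S_1 \setminus S_0| \leq k$ and $t - k \geq k$ by hypothesis, we can choose $T \in \binom{[n] \setminus S_0}{t-k}$ with $S_1 \setminus S_0 \subseteq T$; then $S_1 \subseteq S_0 \cup T$, so $\mu_{S_0 \cup T}(S_1) > 0$, and this intermediate set $T$ is itself selected with positive probability. Note also that the chain never leaves $\supp(\mu)$, since downsampling from $\mu_{S_0 \cup T}$ can only output sets $S_1$ with $\mu(S_1) > 0$. For aperiodicity, the same argument with $S_1 = S_0$ gives $P(S_0, S_0) > 0$, since $S_0 \subseteq S_0 \cup T$ for every valid $T$ and $\mu(S_0) > 0$.

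There is no real obstacle here: the whole proof is a direct calculation plus a counting observation, and the only ingredient that uses the hypothesis $t \geq 2k$ is the single-step irreducibility argument (which needs $t - k \geq k$ so that $T$ is large enough to contain any symmetric difference $S_1 \setminus S_0$). All three conclusions then follow, and in particular standard Markov chain theory gives $\mu$ as the unique stationary distribution.
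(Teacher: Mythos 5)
Your proof is correct, and it takes a somewhat different route from the paper's. The paper verifies stationarity directly: it starts from $S_0 \sim \mu$, shows that the intermediate set $R = S_0 \cup T$ has distribution $\P{R = \tilde R} = \mu(\tilde R)/\binom{n-k}{t-k}$, and then integrates out $R$ to conclude $S_1 \sim \mu$. You instead verify detailed balance by exhibiting $\mu(S_0)P(S_0,S_1) = \mu(S_0)\mu(S_1)\sum_{U \supseteq S_0 \cup S_1,\, |U|=t}\bigl(\binom{n-k}{t-k}\mu(U)\bigr)^{-1}$, which is visibly symmetric in $(S_0, S_1)$. The underlying sum over intermediate sets $U$ is the same in both, but detailed balance buys you the additional fact that $M^t_\mu$ is \emph{reversible}, which the paper does not explicitly record; the paper's direct computation is arguably a bit more transparent in showing exactly where the uniform choice of $T$ and the downsampling weights cancel. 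For irreducibility and aperiodicity, your argument is essentially the paper's, though stated more carefully: the paper says ``there is a positive probability that we sample $T \supseteq S \cup S'$,'' which is slightly off (since $T$ is drawn from $[n]\setminus S_0$ it cannot contain $S_0$); what is actually needed, and what you correctly use, is $T \supseteq S_1 \setminus S_0$, for which $t - k \geq k$ suffices because $|S_1 \setminus S_0| \leq k$.
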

\begin{proof}
Let $P$ denote the transition probability matrix of $M^t_{\mu}.$
Since $t \geq 2k$, for
any $S, S' \in \supp(\mu)$, there is a positive probability that we sample $T \supseteq S \cup S'$. Thus, we have
$P(S, S') > 0,$ and $P$ is both irreducible and aperiodic.

To derive the stationary distribution, suppose that we perform one step of
the chain starting from $S_0 \sim \mu.$ We first derive the distribution of the intermediate set $R:=S_0 \cup T.$ %using $r = t+k $ as short hand.

For any $\tilde{R} \in \binom{[n]}{t},$ the probability of sampling $\tilde{R}$ for the intermediate set $R$ is
\[\P{R = \tilde{R}} = \sum_{S_0 \in \binom{\tilde{R}}{k}} \mu(S_0) \cdot \P{T = \tilde{R} \setminus S_0} = \frac{1}{\binom{n-k}{t-k}} \cdot \mu (\tilde{R}) \] %where $Z_{\mu}(\tilde{R}): =\sum_{S_0 \in \binom{\tilde{R}}{k}} \mu(S_0).$ (Above, we sum over all of the different sets $S_0$ that could produce $\tilde{R}$.)

For any $\tilde{S}_1 \in \supp(\mu)$, the probability of sampling $\tilde{S}_1$ is
\begin{align*}
    \P{S_1 =\tilde{S}_1} &= \sum_{\tilde{R}  \in \binom{[n]}{r}} \P{S_1 =\tilde{S}_1 \given R = \tilde{R} }  \P{R = \tilde{R}}  
    =  \sum_{\tilde{R}  \in \binom{[n]}{t}: \tilde{R} \supseteq \tilde{S}_1}  \frac{\mu(\tilde{S}_1)}{\mu(\tilde{R})} \cdot \frac{1}{\binom{n-k}{t-k}} \cdot \mu (\tilde{R})\\
    &=  \mu (\tilde{S}_1) \sum_{(\tilde{R} \setminus \tilde{S}_1) \in \binom{[n] \setminus  \tilde{S}_1}{t-k} } \frac{1}{\binom{n-k}{t-k}} =\mu(\tilde{S}_1)
\end{align*}
Above, we summed over all $\tilde{R}$ that contain the target set $\tilde{S}_1$.
\end{proof}
The following lemma is the key to analyzing the sampling algorithm, since it quantifies the decrease in $\TV$ distance after running one step of $M^t_\mu$. It will be proven in \cref{subsubsec:subsample lower bound}.
\begin{lemma} \label{lem:subsample lower bound}
%Let $S \in \binom{[n]}{k}$, and assume that for all $i \in [n]$, we have $\P_{S \sim \mu}{i \in S} \leq \frac{Ck}{n}$. Choose $S_0$ such that $\mu(S_0) > 0$. \elizabeth{I don't think we need the assumption $\mu(S_0) > 0$.} Then
Let $\mu: \binom{[n]}{k}$ be a $1/\alpha$-entropically independent distribution. Suppose that for all $i \in [n]$, we have $\P_{S \sim \mu}{i \in S} \leq \frac{Ck}{n}$. Then, for any constant $\epsilon \in (0,\frac{1}{4}],$ and $t = \Omega\parens*{n^{1 -\alpha} (C k^2 \log \frac{1}{1-\epsilon})^{\alpha}}$, the output of a single step of the Markov chain $M^t_{\mu}$ starting from $S_0$ satisfies
\[\forall S \in \binom{[n]}{k}: \P{S_1 = S} \geq \mu(S) (1-\epsilon)\]
\end{lemma}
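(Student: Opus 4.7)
The plan is to unfold $\P{S_1 = S \mid S_0}$ in closed form, apply Cauchy--Schwarz (AM-HM) to reduce the task to upper-bounding $\sum_{R \supseteq A}\mu(R)$ where $A := S_0 \cup S$, and then invoke entropic independence at a carefully chosen graded weight. First I would write out the one-step probability: since $R = S_0 \cup T$ is uniform on size-$t$ supersets of $S_0$ and $S_1 \sim \mu|_R$,
\[\P{S_1 = S \mid S_0} = \frac{\mu(S)}{\binom{n-k}{t-k}}\sum_{R \supseteq A,\ |R|=t}\frac{1}{\mu(R)}, \qquad |A| \in [k, 2k],\]
so the claim is equivalent to $\sum_{R \supseteq A}1/\mu(R) \geq (1-\epsilon)\binom{n-k}{t-k}$. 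Applying AM-HM gives $\sum_{R \supseteq A}1/\mu(R) \geq N_A^2/\sum_{R \supseteq A}\mu(R)$ with $N_A := \binom{n-|A|}{t-|A|}$, turning the task into an upper bound on $\sum_{R \supseteq A}\mu(R)$. Swapping sums and using the elementary $\binom{n-m}{t-m}/\binom{n-|A|}{t-|A|} \leq (t/n)^{m-|A|}$ (valid because $(t-j)/(n-j) \leq t/n$ for $t \leq n$),
\[\sum_{R \supseteq A}\mu(R) = \sum_{S'}\mu(S')\binom{n-|A\cup S'|}{t-|A\cup S'|} \leq N_A \cdot g_\mu(w_A),\]
where $w_A \in \R_{\geq 0}^n$ is the graded weight with $w_{A,i}=1$ on $A$ and $w_{A,i}=t/n$ off $A$.

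The crucial step is then to bound $g_\mu(w_A)$ by entropic independence. Applying \cref{eq:entropic ind tangent} with $z_i = w_{A,i}^{1/\alpha}$ yields
\[g_\mu(w_A)^{1/(k\alpha)} \leq \sum_i p_i w_{A,i}^{1/\alpha} \leq 2Ck/n + (t/n)^{1/\alpha},\]
using $|A| \leq 2k$, the near-isotropy $p_i \leq C/n$, and $\sum_i p_i = 1$. For $t \geq \Omega(n^{1-\alpha}(Ck^2/\epsilon)^\alpha)$ the term $(t/n)^{1/\alpha}$ dominates $2Ck/n$ with slack $O(\epsilon/(k\alpha))$, so $g_\mu(w_A) \leq (1+O(\epsilon))(t/n)^k$. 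An elementary estimate on the ratio of binomials, $N_A/\binom{n-k}{t-k} \geq (1-O(k^2/t))(t/n)^{|A|-k}$, then gives
\[\frac{\sum_{R \supseteq A}1/\mu(R)}{\binom{n-k}{t-k}} \geq \frac{N_A}{\binom{n-k}{t-k}\,g_\mu(w_A)} \geq \frac{(1-O(k^2/t))(t/n)^{|A|-k}}{(1+O(\epsilon))(t/n)^k} \geq 1-\epsilon\]
in the worst case $|A|=2k$ (also requiring $t \geq \Omega(k^2/\epsilon)$); smaller $|A|$ is easier, picking up an extra factor $(n/t)^{2k-|A|} \geq 1$. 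For $\epsilon\in(0,1/4]$ the quantities $1/\epsilon$ and $\log\frac{1}{1-\epsilon}$ are equivalent up to absolute constants, recovering the threshold stated in the lemma.

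The hard part will be this use of EI. The pointwise bound $\mu(R) \leq (\sum_{i\in R}p_i)^{k\alpha}$ only gives $\mu(R) \lesssim (t/n)^{k\alpha}$, which for $\alpha<1$ is polynomially weaker than the $(t/n)^k$ needed to make a pointwise argument work; a ``worst-case'' bound on individual $\mu(R)$ cannot succeed. The key idea, corresponding to the ``average-case negative dependence'' alluded to in \cref{subsec:outline}, is to avoid any pointwise bound on $\mu(R)$ and instead apply EI only at the averaged graded weight $w_A$ appearing in the combinatorial swap. The structure of $w_A$ (value $1$ on $A$, value $t/n$ off $A$) is chosen precisely so that the $k\alpha$-th power in the EI tangent inequality collapses to the combinatorial scale $(t/n)^k$, which is exactly the density of a uniformly random size-$t$ superset of $A$.
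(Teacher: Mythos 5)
Your proof is correct and follows essentially the same strategy as the paper's: both reduce $\P{S_1 = S}$ via Jensen/AM--HM to an upper bound on the expected intermediate mass $\E\left[\mu(S_0 \cup T)\right]$, swap the sum to express that expectation as $g_\mu$ evaluated at a graded weight vector, and then apply the entropic-independence tangent inequality \cref{eq:entropic ind tangent} at exactly that weight. The differences are purely bookkeeping (you use $t/n$ throughout and compensate with the binomial ratio $N_A/\binom{n-k}{t-k}$, whereas the paper works directly with $(t-r)/(n-r)$; you exponentiate the tangent inequality rather than take logarithms), plus one small slip at the end: for $\epsilon\in(0,\tfrac14]$ the quantities $1/\epsilon$ and $\log\tfrac{1}{1-\epsilon}$ are reciprocal up to constants (since $\log\tfrac{1}{1-\epsilon}=\Theta(\epsilon)$), not equivalent --- though as $\epsilon$ is a fixed constant this has no effect on the asymptotic threshold for $t$.
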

Recall that if we used the marginal estimates required by \cref{thm:main}, then by applying \cref{prop:near-isotropic} we get an equivalent distribution $\mu'$ on a ground set of size $O(n)$ that satisfies the above assumption of $\P_{S\sim \mu'}{i\in S}\leq Ck/n$ for some $C=O(1)$ (see \cref{lem:subsample lower bound}).

\begin{proposition} \label{prop:tv-distance-one-step} Let $\mu: \binom{[n]}{k}$ be a $1/\alpha$-entropically independent distribution, and let $\epsilon \in (0, \frac{1}{4}]$ be a constant. Suppose $\P_{S \sim \mu}{i \in S} \leq \frac{Ck}{n}$ for all $i$. Choose the intermediate sample size $t$ according to \cref{lem:subsample lower bound}. Then
%Suppose $\forall S \in \binom{[n]}{k}: \P{S_1 = S} \geq \mu(S) (1-\epsilon)$ then 
\[\norm{P(S_0, \cdot) - \mu}_{\TV} \leq \epsilon\]
and the Markov chain $M^t_{\mu}$ mixes to a distribution that has $\TV$ distance $\epsilon' < \epsilon$ from $\mu$ in $O(\log (1/\epsilon'))$ steps.
\end{proposition}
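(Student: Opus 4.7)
The plan is to derive both claims directly from earlier results in the excerpt; no new machinery is required beyond what is already at hand.

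First, for the one-step bound $\norm{P(S_0, \cdot) - \mu}_{\TV} \leq \epsilon$: \cref{lem:subsample lower bound} supplies the pointwise comparison $\P{S_1 = S} \geq (1-\epsilon)\mu(S)$ for every $S \in \binom{[n]}{k}$. I would then invoke the standard identity
\[
\norm{p-q}_{\TV} = \sum_{S \,:\, q(S) > p(S)} \bigl(q(S) - p(S)\bigr),
\]
applied with $p = P(S_0, \cdot)$ and $q = \mu$. Whenever $\mu(S) > \P{S_1 = S}$, the pointwise bound forces $\mu(S) - \P{S_1 = S} \leq \epsilon \mu(S)$, so summing yields $\norm{P(S_0, \cdot) - \mu}_{\TV} \leq \epsilon \sum_S \mu(S) = \epsilon$. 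This is essentially a one-line conversion from a multiplicative pointwise lower bound to a TV bound.

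Second, for the mixing time claim: by \cref{prop:fast sampling}, $M^t_\mu$ is already irreducible and aperiodic with stationary distribution $\mu$, so the only missing ingredient is a handle on how fast total variation decays. I would choose $t$ via \cref{lem:subsample lower bound} with $\epsilon = 1/4$ (the ``universal'' mixing threshold), so that the first part of this proof gives $\norm{P(S, \cdot) - \mu}_{\TV} \leq 1/4$ from every starting state $S \in \supp(\mu)$ after a single step. Then \cref{thm:mixing}, applied with the value ``$t=1$'' appearing in that theorem, immediately hands back $t_{\mix}(P, \epsilon') \leq \log(1/\epsilon')$ for any target accuracy $\epsilon' \in (0, 1/4]$, which is exactly the desired mixing bound.

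The main obstacle, then, is not inside this proposition at all: it has been abstracted away into \cref{lem:subsample lower bound}, whose proof has to combine the entropic independence inequality with the near-isotropy assumption $\P{i \in S} \leq Ck/n$ to yield a multiplicative comparison between $\P{S_1 = S}$ and $\mu(S)$ after a single run of the intermediate-sampling chain. Once that lemma is taken as given, the present proposition is a pair of routine applications of standard total variation arithmetic and the boosting lemma \cref{thm:mixing}.
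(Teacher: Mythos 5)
Your proposal is correct and follows essentially the same argument as the paper: you convert the pointwise multiplicative bound from \cref{lem:subsample lower bound} into a $\TV$ bound by summing over the states where $\mu(S)$ exceeds $\P{S_1=S}$, and then invoke \cref{thm:mixing} with one step to get the $O(\log(1/\epsilon'))$ mixing bound. The only (harmless) cosmetic difference is that you are slightly more explicit about verifying irreducibility and aperiodicity via \cref{prop:fast sampling}, which the paper leaves implicit.
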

\begin{proof}
%Note that $  (1-\epsilon)^k \geq (1- O(1/c))$ for $\epsilon \leq 1/ck.$ thus 
%\[ \P{S_1 = S} \geq \mu(S) (1 - O(1/c)) \]
The bound on $\TV$ distance follows via %coupling.
\[ \|P(S_0, \cdot) - \mu\|_{\TV}  = \sum_{S\in \binom{[n]}{k}:  \P{S_1 = S} < \mu(S) } (\mu(S) -\P{S_1 = S}) \leq \epsilon \sum_{S\in \binom{[n]}{k}:  \P{S_1 = S} < \mu(S) } \mu(S) \leq \epsilon\]
The mixing time bound follows from \cref{thm:mixing}.
\end{proof}

We have shown that $M^t_{\mu}$ is fast mixing (in fact, mixing in one step for appropriately large $t$). Next, we show that for a wide class of distributions, namely, the class of $\alpha$-fractionally-log-concave distributions with $\alpha = \Omega(1)$ \cite[see][for examples]{AASV21}, each step of $M^t_{\mu}$ can be implemented in $\poly(n,k)$ time via a local Markov chain, i.e., the (muti-step) down-up random walk \cite[Def.~1]{AASV21}.
\begin{remark}[Runtime analysis] \label{rem:runtime}
Suppose $\mu$ is $\alpha$-fractionally-log-concave, and we start with $ S_0^{(0)}$ such that $\mu(S_0^{(0)}) \geq 2^{-n^c}$ for some constant $c > 1$ and we run the chain for $\tau$ steps. Then with probability $\geq 1-\tau 2^{-n},$ for all $0\leq i\leq \tau,$ the $i^{th}$-step starting point, denoted by $S_0^{(i)},$ satisfies $\mu(S_0^{(i)}) \geq 2^{-(n^c+ 2n i)}.$ This can be shown via induction on $i.$ Conditioned on $ \mu(S_0^{(i)}) \geq 2^{-n^c - 2ni},$ we have %We only need to show this for $i=1;$ the remaining cases are analogous.
\begin{align*}
 \P{\mu(S_0^{(i+1)}) \leq 2^{-(n^c + 2(i+1)n)} } &=\mu(S_0^{(0)} \cup T)^{-1} \sum_{S \subseteq (S_0^{(0)} \cup T):\mu(S) \leq 2^{-(n^c + 2(i+1)n)} } \mu(S) \\
 &\leq_{(1)} \frac{2^{-(n^c + 2(i+1)n)} \cdot 2^n }{\mu(S_0^{(i)})}
 \leq 2^{-n}   
\end{align*}
%\[ \]
where in (1) we use the crude bound $\left\lvert \set*{S \subseteq (S_0^{(0)} \cup T):\mu(S) \leq 2^{-(n^c + 2(i+1)n)}} \right\rvert \leq 2^n.$ 

Suppose that this good event happens, i.e.
\[\forall i\in[0,\tau]: \mu(S_0^{(i)}) \geq 2^{-n^c - 2ni}\]
We observe that $\alpha$-fractional-log-concavity is preserved by subdividing \cref{prop:flc-subdivision} and restricting to a subset of the ground set \cite{AASV21}.
In the down-sampling step, we run the (multi-step) down-up walk starting at $S_0^{(i)},$ and use \cite[Thm.~4]{AJKPV21} to bound the runtime. To this end, we need to bound
\begin{align*}
   \mathbb{E}_{T \sim \binom{[n]\setminus S_0^{(i)}}{t-k} }
\left[ \log \parens*{1 + \log \frac{\mu (S_0^{(i)}\cup T ) }{\mu(S_0^{(i)}) } } \right] &\leq_{(1)} \log \parens*{1 + \log \E_{T \sim \binom{[n]\setminus S_0^{(i)}}{t-k} } {\frac{\mu (S_0^{(i)}\cup T ) }{\mu(S_0^{(i)}) }}  }  \\
&= \log \left(1+ \log \frac{1}{\P{S_1 = S_0^{(i)}}} \right)\\
&\leq_{(2)} \log \parens*{1 + \log \frac{1}{\mu(S_1 = S_0^{(i)}) (1-\epsilon)} }\\
&\leq_{(3)} c \log n +\log \tau + \log\log\frac{1} {1-\epsilon}
\end{align*}
where (1) follows from Jensen's inequality for concave function $f(x) = \log (1 + \log(x))$ on $[1, \infty),$ (2) from \cref{lem:subsample lower bound} and (3) from lower bound on $ \mu(S_0^{(i)}).$ The down-sampling then costs \[O\parens*{(t-k)^{\lceil 1/\alpha \rceil} k^{1/\alpha} \left(c \log n + \log \tau + \log\log\frac{1} {1-\epsilon} \right)}\]
and the total runtime is \[O\parens*{\tau (t-k)^{\lceil 1/\alpha \rceil} k^{1/\alpha} \left(c \log n + \log \tau + \log\log\frac{1} {1-\epsilon} \right)}.\]

As a slight optimization, we can replace $(t-k)^{\lceil 1/\alpha\rceil} k^{1/\alpha}$ with $k^{\lceil 1/\alpha\rceil} (t-k)^{1/\alpha}$ when both $\mu$ and its complement $\mu^{\text{comp}}$ are $\alpha$-fractionally-log concave, by down-sampling from  $\mu^{\text{comp}}_{S_0 \cup T}$ then output the complement as $S_1,$ where $\mu^{\text{comp}}: \binom{[n]}{n-k} \to \R_{\geq 0}$ is the complement of $\mu$, defined by $\mu^{\text{comp}}([n] \setminus S) = \mu(S) \forall S \in \binom{[n]}{k}.$ In all important instances of $\alpha$-fractionally-log-concavity, $\frac{1}{\alpha} \in \N$ and this optimization is unnecessary. The bound on total runtime can be simplified into
    $O(n^{1/\alpha-1} \poly(k, \log\frac{1}{\epsilon})).$
%and in (2) we use the induction hypothesis.

\end{remark}

\subsection{Advantage Over Rejection Sampling} \label{subsec:advantage}

\par While we use a similar intermediate sampling framework as \cite{AD20}, our novel analysis of Markov chain intermediate sampling improves the runtime and applies to wider families of distributions. In order to fully understand the advantages realized by our intermediate sampling framework, we first need an overview of a rejection sampling-based implementation of intermediate sampling \cite{Der19}, which inspired the analysis of \cite{AD20}. We then provide an example of $\frac{1}{2}$-log-concave distributions where Markov chain intermediate sampling succeeds using a smaller intermediate sample size than what is required for rejection sampling. 

\par Let $S_0 \in \supp(\mu)$. One step of rejection sampling is given by:
\begin{enumerate}
    \item Sample $T \sim \binom{[n] \setminus S_0}{t - k}$. 
    
    \item Accept the set $S_0 \cup T$ with probability
    $$\frac{\mu(S_0 \cup T)}{\max_{T' \in \binom{[n] \setminus S_0}{t - k}} \mu(S_0 \cup T')}$$
    
    \item Downsample $S_1 \sim \mu_{S_0 \cup T}$. 
\end{enumerate}
The key difference between rejection sampling and our Markov chain intermediate sampling algorithm is the rejection step, which is necessary if we want our chain to mix to the correct stationary distribution $\mu$. In order to have a sufficiently large acceptance probability, and assuming $\mu$ is isotropic, we require that for all $T$,
$$
\mu(T) \leq \left(\frac{t}{n}\right)^k \cdot (1 + \epsilon)^k
$$
Here, $\epsilon$ is a parameter related to the guarantee on $\|P(S_0, \cdot) - \mu\|_{\TV}$. Using this bound, we can ensure that the expected acceptance probability is $1 - O(\epsilon k)$. 

\par This inequality describes a ``worst-case'' condition on $T$. This "worst-case" type analysis originated from earlier works that introduced intermediate sampling for Determinantal Point Processes \cite{Der19}. The proof of our worst-case inequality on $\mu(T)$ relies heavily on the fact that the KL divergence between a log-concave distribution $\mu$ and an arbitrary distribution $\nu$ contracts by a precise amount when applying the down operator $D_{k \to m}$. 
\[
\calD_{\KL}(\nu D_{k \to \ell} \| \mu D_{k \to \ell}) \leq \frac{\ell}{k} \cdot \calD_{\KL}(\nu \| \mu)
\]
This contraction is well-known for log-concave distributions \cite{CGM19}, but does not hold with the factor $\l/k$ for $\alpha$-fractionally-log-concave distributions. On the other hand, the inequality we need to show (from the proof \cref{lem:subsample lower bound}) is ``average-case'' in nature, and when $\mu$ is isotropic, it takes the form:
$$\E_{T \sim \binom{[n]}{t}}{\mu(T)} \leq \left( \frac{t}{n} \right)^k \cdot \frac{1}{1 - \epsilon}$$

%\june{so my suggestion is to comment out the below part in section 3.3. I move all of it into lemma 17. We should move statement of lemma 17 up to prop 15 too, and keep the proof in 3.3.1.}
%\june{Michal, please write the argument on why the original method doesn't work}

To concretely illustrate the advantage of Markov chain intermediate sampling, let us consider an example where the worst-case inequality fails to hold. Suppose that $k=2$, $n$ is even, and $\mu$ samples a set from $\{1,\frac n2+1\}$, $\{2,\frac n2+2\}$,... uniformly at random, so that $\mu(\{i,\frac n2+i\})=\frac 2n$. This distribution is isotropic, $\frac{1}{2}$-sector stable \cite{AASV21}, and $\frac{1}{2}$-fractionally-log-concave, and yet, according to the worst-case analysis, it does not yield enough acceptance probability when $t = o(n)$. For any set $T,$ we have that
\begin{comment}
\begin{align*}
\max_{T' \in{[n]\choose t}} \sum_{S\subseteq T'}\frac{\mu(S)}{\prod_{i\in S} (k/n)}=
\frac{n^2}{4}\max_{T\in{[n]\choose t}}\sum_{S\subseteq T}\mu(S)
\geq \frac{n^2}{4}\cdot\frac {t-1}{n} = \Omega\Big(\frac nt\Big)\cdot t^2,
\end{align*}
\end{comment}
\begin{align*}
    \mu(T) \leq \frac{t}{2} \cdot \frac{2}{n} = \frac{t}{n}
\end{align*}
Equality is achieved by selecting a subset $T$ that contains as many  pairs of the form $\{i,\frac n2+i\}$ as possible, i.e., at least $(t-1)/2$. Thus, the worst-case analysis would suggest that no non-trivial intermediate sampling is possible for the distribution $\mu$; this is because $t/n\gg (t/n)^2$ for small values of $t$. 

However, our relaxed average-case analysis captures the fact that realistically, not every element of $T$ will be paired up. In fact, we expect only a constant number of pairs when $t = O(\sqrt{n})$, so for this example, we have:
\[
\E_{T \sim \binom{[n]}{t}}{\mu(T)} \leq C \cdot \frac{2}{n} \leq O\left(\frac{1}{n} \right) = O\left( \frac{t^2}{n^2} \right)
\]

\subsection{Proof of \cref{lem:subsample lower bound}}\label{subsubsec:subsample lower bound}
 
\par In this section, we will prove \cref{lem:subsample lower bound}.
\begin{lemma} \label{lem:subset-probability}
Let $U, V$ be a sets of size $u, v \leq k$ respectively with $ U\cap V = \emptyset.$ We have:
$$
\left(\frac{t - (u + v)}{n - (u + v)} \right)^u \leq \P_{T \in \binom{[n] \setminus V}{t - v}}{U \subseteq T} \leq \left(\frac{t - v}{n - v}\right)^u
$$
\end{lemma}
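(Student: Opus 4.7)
The plan is to write the probability exactly as a product and then bound each factor. The number of sets $T \in \binom{[n] \setminus V}{t-v}$ containing $U$ is the number of ways to pick the remaining $t - v - u$ elements from $[n] \setminus (U \cup V)$, so
\[
\P_{T \in \binom{[n] \setminus V}{t-v}}{U \subseteq T} \;=\; \frac{\binom{n - v - u}{t - v - u}}{\binom{n - v}{t - v}}.
\]
Expanding and cancelling factorials, this simplifies to the product
\[
\prod_{i=0}^{u-1} \frac{t - v - i}{n - v - i}.
\]

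Now I would show that each factor in this product is squeezed between the two claimed bounds. Treating $i$ as a continuous parameter, the derivative of $\frac{t - v - i}{n - v - i}$ with respect to $i$ equals $\frac{t - n}{(n - v - i)^2} \leq 0$ since $t \leq n$. Hence the function is monotonically non-increasing in $i$ on $\{0, 1, \dots, u - 1\}$. In particular every factor is at most the $i = 0$ value, namely $\frac{t - v}{n - v}$, yielding
\[
\prod_{i=0}^{u-1} \frac{t - v - i}{n - v - i} \;\leq\; \left(\frac{t - v}{n - v}\right)^u,
\]
which is the upper bound.

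For the lower bound, each factor is at least the $i = u-1$ value $\frac{t - v - u + 1}{n - v - u + 1}$. A short cross-multiplication argument shows that $\frac{t - v - u + 1}{n - v - u + 1} \geq \frac{t - (u + v)}{n - (u + v)}$: clearing denominators reduces to $n - v - u \geq t - v - u$, i.e., $n \geq t$, which holds. Taking the $u$-th power then gives the desired lower bound $\left(\frac{t - (u + v)}{n - (u + v)}\right)^u$. The only places I need $n \geq t$ are exactly where the problem's setup guarantees it (we are subsampling), so there is no real obstacle; the proof is essentially a careful algebraic manipulation, and I would just take care to state the monotonicity argument cleanly rather than estimating each factor by hand.
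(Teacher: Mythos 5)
Your proposal is correct and takes essentially the same route as the paper: both express the probability as $\binom{n-v-u}{t-v-u}\big/\binom{n-v}{t-v}$, rewrite it as the telescoping product $\prod_{i=0}^{u-1}\frac{t-v-i}{n-v-i}$, and bound every factor by the first (for the upper bound) or the last (for the lower bound). The only cosmetic difference is that you justify the factor-wise monotonicity via a derivative and a cross-multiplication, whereas the paper leaves these elementary steps implicit.
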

\begin{proof}
Since we are sampling the elements of $T$ uniformly at random from $[n]$,
\begin{align*}
\P_{T \in \binom{[n] \setminus V}{t - v}}{U \subseteq T} = \frac{\binom{n - v - u}{t - v - u}}{\binom{n - v}{t - v}} = \frac{(t - v)(t - v - 1) \cdots (t - v - u + 1)}{(n - v)(n - v - 1) \cdots (n - v - u + 1)} \leq \left(\frac{t - v}{n - v} \right)^u 
\end{align*}
Similarly, we also have:
$$\P_{T \in \binom{[n] \setminus V}{t - v}}{U \subseteq T} \geq \left(\frac{t - v - u + 1}{n - v - u + 1} \right)^u \geq \left(\frac{t - (u + v)}{n - (u + v)} \right)^u
$$
\end{proof}

\begin{proof}[Proof of \cref{lem:subsample lower bound}]
Let $R = S_0 \cup S$, and let $r = |S_0 \cup S|$. Note that $\abs{S \setminus S_0} = r-k$ and
\begin{align*}
    \P{S_1 = S} &= \E_{T \sim \binom{[n] \setminus S_0 }{t - k}} { \frac{\mu(S)}{\sum_{S' \subseteq (S_0 \cup T)} \mu(S')} \big| S \subseteq T } \cdot \P_{T \sim \binom{[n] \setminus S_0}{t - k}}{(S \setminus S_0) \subseteq T} \\
    &= \E_{T' \sim \binom{[n] \setminus R}{t -r}} { \frac{\mu(S)}{\sum_{S' \subseteq (R \cup T')} \mu(S')} } \cdot \P_{T \sim \binom{[n] \setminus S_0}{t - k}}{(S\setminus S_0) \subseteq T} \\
    &\geq_{(1)} \frac{\mu(S)}{\E_{T' \sim \binom{[n] \setminus R}{t - r}}{\sum_{S' \subseteq (R \cup T')} \mu(S')}} \cdot \P_{T \sim \binom{[n]\setminus S_0}{t - k}}{(S\setminus S_0) \subseteq T}\\
    %&=_{(2)} \frac{\mu(S)}{\E_{T' \sim \binom{[n] \setminus R}{t - r}}{\sum_{S' \subseteq (S_0 \cup S \cup T')} \mu(S')}} \cdot \frac{\binom{n-r}{t-r}}{\binom{n-k}{t-k}}\\
    &\geq_{(2)}   \frac{\mu(S)}{\E_{T' \sim \binom{[n] \setminus R}{t - r}}{\sum_{S' \subseteq (R \cup T')} \mu(S')}} \cdot  (\frac{t-r}{n-r})^{r-k}
\end{align*}
Inequality (1) is an application of Jensen's inequality to the function $f(x) = \frac{c}{x}$, which is convex when $x > 0$. Inequality (2) use \cref{lem:subset-probability} with $U = S\setminus S_0$ and $V = S_0.$%Equality (2) use the fact that for $T \in  \binom{[n]\setminus S_0}{t - k}$ and $T$ containing $S\setminus S_0$, then $T\setminus (S \setminus S_0)  \in \binom{[n] \setminus R}{t-r}.$ Inequality (3) use \cref{lem:subset-probability}. %$ \abs{T\setminus (S \setminus S_0) } = (t-k) - (r-k) = t-r$

Now if we bound $\E_{T' \sim \binom{[n] \setminus R}{t - r}}{\sum_{S' \subseteq (R \cup T')} \mu(S')} $ by $  (\frac{t-r}{n-r})^{r-k} \frac{1}{1-\epsilon}$ then we are done.
%We expand $\E_{T' \sim \binom{[n] \setminus R}{t - 2k}}{\mu(T' \cup R)}$ by considering all possible $S' \in \binom{n}{k}$ and for each $S'$, weighting its mass $\mu(S')$ by the probability that $S' \subseteq (T' \cup S_0 \cup S$). 
\begin{align*}
%\E_{T' \sim \binom{[n] \setminus R}{t - r}}{\mu(T' \cup R)} &= 
\E_{T' \sim \binom{[n] \setminus R}{t - r}} {\sum_{S' \subset (T' \cup R)} \mu(S') } = \sum_{S' \in \binom{[n]}{k}} \mu(S') \cdot \P_{T' \sim \binom{[n] \setminus R}{t - r}}{(S' \setminus R) \subseteq T']} 
\leq \sum_{S' \in \binom{[n]}{k}} \mu(S') \cdot \left( \frac{t - r}{n - r} \right)^{|S' \setminus R|} %\leq \sum_{S' \in \binom{[n]}{k}} \mu(S') \cdot \left( \frac{t - k}{n - k} \right)^{|S' \setminus R|}
\end{align*}
In the very last line, we applied \cref{lem:subset-probability} with $U = |S' \setminus R|$ and $V = R.$%, and also used the facts that $r \geq k$ and $|S' \setminus R| \leq k$. 
% \par We now have a upper bound expression that can be framed as the generating polynomial $g_\mu(z_1, \ldots, z_n)$ evaluated at a particular choice of $\{z_i\}$. Recall that
% $$
% g_\mu(z_1, \ldots, z_n) = \sum_{S \in \binom{[n]}{k}} \mu(S) \prod_{i \in S} z_i
% $$

If we set $z_i =\begin{cases} (\frac{n - r}{t - r})^{1/\alpha} & \text{ if } i \in (S_0 \cup S) \\ 1 &\text{ else}\end{cases}$, then we can rewrite
\begin{align*}
\sum_{S' \in \binom{[n]}{k}} \mu(S') \cdot \left( \frac{t - r}{n - r} \right)^{|S' \setminus (S_0 \cup S)|} &= \left( \frac{t - r}{n - r} \right)^k \sum_{S' \in \binom{[n]}{k}} \mu(S') \cdot \left( \frac{n - r}{t - r} \right)^{|S' \cap (S_0 \cup S)|} \\
&= \left( \frac{t - r}{n - r} \right)^k \cdot g_\mu\parens*{z_1^{\alpha}, \ldots, z_n^{\alpha} }
\end{align*}

Applying \cref{eq:entropic ind tangent} to $g_\mu(z_1^{\alpha}, \ldots, z_n^{\alpha})$ and noting that $p_i = \frac{\P_{S \sim \mu}{i \in S}}{k} $, we obtain
\begin{align*}
     g_\mu\parens*{z_1^{\alpha}, \ldots, z_n^{\alpha} } &\leq \left( \sum_{i = 1}^n \frac{\P_{S \sim \mu}{i \in S}}{k} \cdot z_i \right)^{k \alpha} \\
      \log g_\mu\parens*{z_1^{\alpha}, \ldots, z_n^{\alpha} } &\leq k \alpha \log \left( \sum_{i = 1}^n  \frac{\P_{S \sim \mu}{i \in S}}{k}  \cdot z_i \right) \\
      &\leq_{(1)}  k \alpha \parens*{-1 + \sum_{i = 1}^n  \frac{\P_{S \sim \mu}{i \in S}}{k}  \cdot z_i  }\\
      &=_{(2)} k \alpha \parens*{\sum_{i = 1}^n  \frac{\P_{S \sim \mu}{i \in S}}{k}  \cdot (z_i - 1)}\\
      &= \alpha \sum_{i = 1}^n  \P_{S \sim \mu}{i \in S}  \cdot (z_i - 1)
\end{align*}
where in (1) we use $\log x \leq x -1$ for $x\in (0,\infty)$ and in (2) we use $\sum_{i = 1}^n  \frac{\P_{S \sim \mu}{i \in S}}{k} = 1 .$

Substitute $z_i$ as specified above into the final inequality, we get
\begin{align*}
    \log g_\mu\parens*{z_1^{\alpha}, \ldots, z_n^{\alpha} }     &\leq \alpha \sum_{i \in (S_0 \cup S)} \frac{Ck}{n} \left(\frac{n - r}{t - r} \right)^{1/\alpha}\\
    &= \frac{C\alpha k r }{n}\left(\frac{n - r}{t - r}\right)^{1/\alpha}\\
    &\leq \frac{2C k^2  }{n} \left(\frac{n - r}{t - r}\right)^{1/\alpha}.
\end{align*}
\end{proof}

	\section{Lower Bound on Intermediate Sampling}\label{sec:lowerbound}

We first show that the dependence of our sparsification analysis on $n$ is optimal. Consider the uniform distribution $\mu_0$ over singletons of a ground set of $n/k$ elements. Any distribution on singletons is log-concave as the generating polynomial is linear and thus log-concave. Now apply the construction of \cref{ex:blowup} with $m=k$ to $\mu_0$ in order to obtain a new distribution $\mu$ on $\binom{[n]}{k}$. This distribution is uniform over parts of a particular partition of the ground set $[n]$ into $n/k$ sets $S_1,\dots,S_{n/k}$. As such, the distribution is also isotropic. Note that this distribution is also $1/k$-entropically independent.

If we sample a uniformly random set $T$ of size $t$, then the chance that $S_i$ is contained in $T$ can be upperbounded by
\[ \binom{n-k}{t-k}/\binom{n}{t}\simeq (t/n)^k. \]
Thus the chance that any of the $S_i$ are contained in $T$ can be upperbounded (via a union bound) by roughly
\[ n\cdot (t/n)^k. \]
Thus, as long as $t\ll n^{1-1/k}$, the above is negligible. Without having any $S_i$ in the support with high probability, we obviously cannot faithfully produce a sample of $\mu$ from subsets of $T$.

Next we construct an example showing that even higher-order marginals cannot remove the dependence on $n$ for entropically independent distributions (in sharp contrast with \cref{conj:high-order}). Our construction is based on Reed-Solomon codes.

\begin{lemma}
Let $q$ be a prime number and $\F_q$ the finite field of size $q$. Fix $k$ points $\set{x_1,\dots,x_k}\subseteq \F_q$ where $k$ is a constant and choose a set of $k$ random permutations from $\mathbb{F}_q$ to $\mathbb{F}_q$ and call them $\pi_1, \ldots, \pi_k$. Let $\mu:\binom{\Omega}{k} \to \R_{\geq 0}$ be the uniform distribution over sets $\set*{(x_i, y_i) \given i \in [k]}$ s.t.\ $p(x_i) = \pi_i(y_i)$ for some polynomial $g$ of $\deg(p) \leq d$. Note that the ground set $\Omega$ is $\set{x_1,\dots,x_k}\times \F_q$. Then
\begin{enumerate}
    \item $\mu$ satisfies $(1/\alpha) $-entropic independence with $\alpha = \frac{d+1}{k}.$
    \item Any domain sparsification scheme to sample from $\mu$ requires $t = \tilde{\Omega}(n^{1-\alpha}),$ even when we are allowed to sample higher order marginals.
\end{enumerate}
\end{lemma}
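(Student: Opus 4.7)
The plan is to prove the two parts separately, both leveraging the MDS (maximum-distance-separable) property of Reed--Solomon codes; the conceptual step is recognizing the right entropy reformulation in part~1, after which both parts follow by direct combinatorics.

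For part~1, I would recast entropic independence in entropy form. Since $\mu$ is uniform on the $q^{d+1}$ codewords of the code $C\subseteq \mathbb{F}_q^k$ (each codeword giving the set $\set{(x_j,\pi_j^{-1}(p(x_j))) : j\in [k]}$) and $\mu D_{k\to 1}$ is uniform on $\Omega$ by symmetry, both sides of
\[\DKL{\nu D_{k\to 1}\river \mu D_{k\to 1}}\leq \tfrac{1}{\alpha k}\DKL{\nu\river \mu}\]
collapse to entropy expressions. A short calculation using the mixture structure of $\nu D_{k\to 1}$ gives $H(\nu D_{k\to 1})=\log k + \tfrac{1}{k}\sum_{j=1}^{k} H(c_j)$, where $c_j$ denotes the $j$-th coordinate under $\nu$; after substitution, and using $\alpha k = d+1$, the inequality reduces to the Shearer-type bound
\[H(c_1,\dots,c_k)\leq \tfrac{d+1}{k}\sum_{j=1}^{k}H(c_j)\]
for every $\nu$ supported on $C$. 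This is a direct consequence of MDS: for each $F\subseteq[k]$ with $\card{F}=d+1$, the projection $c\mapsto c_F$ is a bijection on $C$, so $H(c_F)=H(c)$. Summing over all $\binom{k}{d+1}$ such $F$ and applying subadditivity within each term,
\[\binom{k}{d+1}H(c)=\sum_{\card{F}=d+1}H(c_F)\leq \sum_{\card{F}=d+1}\sum_{j\in F}H(c_j)=\binom{k-1}{d}\sum_j H(c_j),\]
which rearranges to the claim via $\binom{k}{d+1}/\binom{k-1}{d}=k/(d+1)$.

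For part~2, the key observation is that all marginals $\P{T\subseteq S}$ of order $\card{T}=\ell\leq d+1$ are invariant under the permutations $\pi_i$: if $T$ has elements in $\ell$ distinct blocks, MDS gives exactly $q^{d+1-\ell}$ polynomials interpolating the prescribed values, so the marginal equals $q^{-\ell}$; otherwise it is $0$. Since $1/\alpha = k/(d+1)\leq d+1$ in the regime $\alpha=\Omega(1)$ targeted by the conjecture, any scheme that only queries marginals of order $O(1/\alpha)$ produces an intermediate subset $T$ whose distribution is independent of $\pi$. Conditioning on such a $\pi$-independent $T$ with block occupancies $\card{T_i}$ summing to $t$, each of the $q^{d+1}$ codewords lies in $T$ with probability exactly $\prod_i \card{T_i}/q$ over random $\pi$, since each $\pi_i^{-1}(p(x_i))$ is uniform on $\mathbb{F}_q$ and the $\pi_i$'s are independent. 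A union bound combined with AM-GM ($\prod_i \card{T_i}\leq(t/k)^k$) gives
\[\P{T\supseteq\text{some codeword}}\leq q^{d+1-k}(t/k)^k = O\parens*{(t/n^{1-\alpha})^k},\]
using $q=n/k$ and $k=O(1)$. Since the scheme's output must be a subset of $T$ lying in $\supp(\mu)$, $\TV$-closeness to $\mu$ forces this probability to be $\Omega(1)$, whence $t=\tilde\Omega(n^{1-\alpha})$.
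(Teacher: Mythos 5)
Your proof is correct, and for part~1 it follows a genuinely different route from the paper. The paper observes that the MDS property makes $\mu D_{k\to(d+1)}$ the uniform distribution over bases of a partition matroid, hence log-concave and $1$-entropically independent; it then chains this with the data-processing inequality $\mathcal{D}_{\mathrm{KL}}(\nu D_{k\to(d+1)} \,\|\, \mu D_{k\to(d+1)}) \le \mathcal{D}_{\mathrm{KL}}(\nu\,\|\,\mu)$ to contract by a factor $\frac{1}{d+1}$ at the level of order-$(d+1)$ marginals. You instead unwind the definition into Shannon entropies (using that both $\mu$ and $\mu D_{k\to 1}$ are uniform) and prove the needed bound $H(c)\le\frac{d+1}{k}\sum_j H(c_j)$ directly by a Shearer-type counting argument over $(d+1)$-subsets, again driven by the MDS bijectivity of projections. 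Your version is more self-contained---it does not invoke log-concavity of matroid-basis distributions or the entropy contraction machinery for log-concave polynomials---at the cost of being specialized to the uniform-on-a-code setting; the paper's argument is shorter because it leans on those general results. For part~2 the two proofs are essentially the same union-bound computation; the only cosmetic difference is that you union over the $q^{d+1}$ codewords (with the nice observation that a fixed codeword lands in $T$ with probability exactly $\prod_i |T_i|/q$ over random $\pi$, cleaned up by AM--GM), whereas the paper unions over $\binom{T}{k}$ potential supported sets and uses $\mathbb{P}_\pi[S\in\operatorname{supp}(\mu)]\approx q^{-(k-d-1)}$. Both give the same $t^k q^{d+1-k}$ bound and hence the same $t=\tilde{\Omega}(n^{1-\alpha})$ conclusion.
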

\begin{proof}
The distribution $\mu D_{k \to (d+1)}$ is uniform over $\{(x_j, y_j): j \in J \subseteq [k], |J| = (d + 1)\}$, because for any such set, there exists a unique polynomial $p$ of degree at most $d$ such that $p(x_j) = \pi_j(y_j)$ for all $j\in J$. Thus, high-order marginals, up to order $d+1$, are independent of the choice of permutations $\pi_1,\dots,\pi_k$.

%from any set of $(d + 1)$ points with different $x$ coordinates, we can construct a degree-$d$ polynomial. \elizabeth{Maybe elaborate more?} 

The support of $\mu D_{k\to (d+1)}$ forms the basis of a partition matroid: for each $x=x_i$, we have a block consisting of all points $\{(x, y) : y \in \mathbb{F}_q\}$, and for each set in the support of $\mu D_{k \to (d+1)}$, we have at most one element per block. Since we have a uniform distribution over matroid bases, $\mu D_{k \to (d+1)}$ is log-concave, and thus it satisfies $1$-entropic independence. We use this to upper bound $\DKL{\nu D_{k \to 1} \river \mu D_{k \to 1}}$, and from here, conclude $\frac{d + 1}{k}$-entropic independence of $\mu$:
\begin{align*}
    \DKL{\nu D_{k \to 1} \river \mu D_{k \to 1}} &= \DKL{(\nu D_{k \to (d + 1)}) D_{(d + 1) \to 1} \river (\mu D_{k \to (d + 1)}) D_{(d + 1) \to 1}} \\
    &\leq \frac{1}{d + 1} \cdot \DKL{\nu D_{k \to (d + 1)} \river \mu D_{k \to (d + 1)}} \\
    &\leq \frac{1}{d + 1} \cdot \DKL{\nu \river \mu} = \frac{1}{\frac{d + 1}{k} \cdot k} \cdot \DKL{\nu \river \mu}
\end{align*}
The second line follows from $\mu D_{k \to (d+1)}$ satisfying $1$-entropic independence, and the third line comes from the data-processing inequality. 

\par We now prove that for all $t \leq o\left(n^{1 - \alpha} \right)$, no domain sparsification scheme exists, even with access to higher order marginals. For $d' \leq d + 1$, the distribution $\mu D_{k \to {d'}}$ is uniform over the size $d'$ independent sets of the partition matroid defined above. One consequence of the independence of high-order marginals from the choice of permutations $\pi_1,\dots,\pi_k$ is that the higher order marginals do not provide any information about the identity of the distribution $\mu$. 

Suppose that we choose our sample in domain sparsification from a distribution whose ground set is the sparse subset $T$. We want an upper bound on the probability (over the choice of permutations) that $T$ contains some $S \in \supp(\mu)$. 

In order for $T$ to contain a valid $S$, there must be some subset in $S \in \binom{T}{k}$ associated to a degree $\leq d$ polynomial $p$ satisfying $p(x_i) = \pi_i(y_i)$. However it is easy to see that for any particular set $S$ the probability (over the choice of permutations) that $S$ is in the support of $\mu$ is $\simeq 1/q^{k-d-1}$.

We can upper bound $\P{S \subseteq T \text{ for some }S\in \supp(\mu)}$ by a union bound as follows:
\begin{align*}
    \binom{t}{k} \cdot \frac{1}{q^{k-d-1}} \leq \frac{t^k}{q^{k-d-1}}
\end{align*}
This implies that for any $t \leq o(q^{(k-d-1)/k}) \leq o(n^{(k-d-1)/k})$, the probability of containing a set in the support is negligible. Note that we have $\alpha = \frac{d+1}{k}$, so $1 - \alpha = \frac{k-d-1}{k}$, which completes the lower bound.
\end{proof}
	
	\section{Missing Proofs}\label{sec:proofs}

In this section we prove \cref{prop:entropic-ind-subdivision,prop:flc-subdivision}. To complete the proofs, we need to understand a few results concerning the correlation matrix of a distribution $\mu$ to get an alternate characterization of $\alpha$-fractional-log-concavity (\cref{thm:correlation matrix vs FLC}).

\begin{definition}
For distribution $\mu: \binom{[n]}{k} \to \R_{\geq 0},$ let the \textit{correlation matrix} $\corMat_{\mu} \in \R^{n\times n}$  be defined by
\[\corMat_{\mu} (i,j) = \begin{cases} 1 -\P{i} &\text{ if } j=i \\ \P{j \given i} - \P{j} &\text{ else}\end{cases} \]
where $\P{j\given i} = \P_{T \sim \mu}{ j \in T \given i \in T}, \P{j} = \P_{T \sim \mu}{j \in T}$.
\end{definition}

\par For a distribution $\mu$ on $\binom{[n]}{k}$ and $\lambda = (\lambda_1,\dots, \lambda_n) \in \R^{n}_{>0}$, the \emph{$\lambda$-external field} applied to $\mu$ is also a distribution on $\binom{[n]}{k}$, denoted by $\lambda \ast \mu$, given by
\[\mathbb{P}_{\lambda \ast \mu}[S] \propto \mu(S)\cdot \prod_{i \in S}\lambda_i.\]
Note that for any $(z_1,\dots, z_n) \in \R^{n}_{\geq 0}$, 
\[g_{\lambda \ast \mu}(z_1,\dots, z_n) \propto g_{\mu}(\lambda_1 z_1,\dots, \lambda_n z_n).\]

\begin{theorem} [{\cite[Lemma~67, Remark~68]{AASV21}}] \label{thm:correlation matrix vs FLC}
A polynomial $g$ is $\alpha$-fractionally-log-concave if and only if the largest eigenvalue of $\lambda * \mu$'s correlation matrix satisfies $ \lambda_{\max}(\corMat_{\lambda \ast \mu} ) \leq \frac{1}{\alpha}$ for all external fields $\lambda \in \R_{\geq 0}^n$.
\end{theorem}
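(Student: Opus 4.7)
The plan is to compute the Hessian of $f(z):=\log g(z_1^\alpha,\dots,z_n^\alpha)$ explicitly at $z=\mathbf{1}$, recognize it as a diagonal rescaling of $\corMat_\mu$, and then use the external field operation to reduce the Hessian condition at an arbitrary base point to the Hessian condition at $\mathbf{1}$ for $g_{\lambda*\mu}$.

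\textbf{Step 1: Hessian at $\mathbf{1}$.} Since $g$ is multi-affine, all $\partial_{u_i}^2 g$ vanish, and assuming WLOG that $\mu$ is normalized so $g(\mathbf{1})=1$, one has $\partial_{u_i} g(\mathbf{1}) = \P_{S\sim\mu}{i\in S}$ and $\partial_{u_i}\partial_{u_j} g(\mathbf{1}) = \P_{S\sim\mu}{i,j\in S}$ for $i\neq j$. A direct chain-rule computation, with $u_i = z_i^\alpha$, gives
\[ \nabla^2 f(\mathbf{1}) \;=\; -\alpha D \;+\; \alpha^2\bigl(K-\mathbf{p}\mathbf{p}^\intercal\bigr), \]
where $\mathbf{p}=(\P{i\in S})_i$, $D=\diag(\mathbf{p})$, and $K_{ij}=\P{i,j\in S}$ (with $K_{ii}=\P{i\in S}$). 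Hence $\nabla^2 f(\mathbf{1})\preceq 0$ is equivalent to $\alpha\,D^{-1/2}(K-\mathbf{p}\mathbf{p}^\intercal)D^{-1/2}\preceq I$.

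\textbf{Step 2: Matching to $\corMat_\mu$.} Directly from the definition, $\corMat_\mu = D^{-1}(K-\mathbf{p}\mathbf{p}^\intercal)$: the off-diagonal entries are $K_{ij}/\P{i}-\P{j}=\P{j\mid i}-\P{j}$, and the diagonal entries are $\P{i}/\P{i}-\P{i}=1-\P{i}$. Since $D^{-1}(K-\mathbf{p}\mathbf{p}^\intercal)$ is similar to the symmetric matrix $D^{-1/2}(K-\mathbf{p}\mathbf{p}^\intercal)D^{-1/2}$ via $D^{1/2}$-conjugation, the two share spectra. Therefore the Hessian condition at $\mathbf{1}$ is exactly $\lambda_{\max}(\corMat_\mu)\leq 1/\alpha$.

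\textbf{Step 3: Reduction via external fields.} Because $g_{\lambda*\mu}(z)\propto g_\mu(\lambda_1 z_1,\dots,\lambda_n z_n)$, a substitution $\tilde z_i = z_i/z_{0,i}$ shows that the Hessian of $\log g_\mu(z^\alpha)$ at any base point $z_0\in \R_{>0}^n$ equals a positive-diagonal conjugation of the Hessian of $\log g_{\lambda*\mu}(\tilde z^\alpha)$ at $\tilde z=\mathbf{1}$, with $\lambda=z_0^\alpha$ (componentwise). Positive-diagonal conjugation preserves the sign of semidefiniteness, so applying Steps 1--2 to $g_{\lambda*\mu}$ shows that $\nabla^2\log g_\mu(z^\alpha)\preceq 0$ at $z_0$ if and only if $\lambda_{\max}(\corMat_{\lambda*\mu})\leq 1/\alpha$. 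Since $\alpha$-fractional-log-concavity of $g$ is equivalent to the Hessian being $\preceq 0$ at every interior point of $\R_{\geq 0}^n$, and $\lambda=z_0^\alpha$ ranges over all of $\R_{>0}^n$ as $z_0$ does, this gives the equivalence for strictly positive $\lambda$. The boundary case $\lambda\in\R_{\geq 0}^n$ follows by a continuity/limit argument (coefficients of $g_{\lambda*\mu}$ depend continuously on $\lambda$, and the spectrum of a symmetric matrix is continuous in its entries).

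The main obstacle is the bookkeeping in Step 3: one must carefully track both the chain-rule Jacobian of $z\mapsto z^\alpha$ and the first-order correction it contributes to $\nabla^2 f$ (the $-\alpha D$ term in Step 1 arises precisely from this), and verify that after absorbing the external field $\lambda=z_0^\alpha$, the remaining transformation is indeed a \emph{positive} diagonal conjugation. Once these checks are made, the eigenvalue characterization of $\corMat$ transfers from the point $\mathbf{1}$ to the entire positive orthant without loss.
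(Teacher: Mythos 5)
The theorem you are proving is not proved in this paper at all—it is cited verbatim from \textcite{AASV21} (Lemma~67, Remark~68), so there is no in-paper argument to compare against. That said, your blind proof is correct and self-contained, and it matches the standard route: computing the Hessian of $\log g(z^\alpha)$, identifying it at $\mathbf{1}$ with a conjugate of the correlation matrix, and using the external-field pullback to move the base point. A few remarks on the checks you flagged: your Step~1 formula $\nabla^2 f(\mathbf{1}) = -\alpha D + \alpha^2(K-\mathbf p\mathbf p^\intercal)$ with the convention $K_{ii}=p_i$ is exactly right; the diagonal entry works out to $-\alpha p_i + \alpha^2(p_i - p_i^2) = -\alpha^2 p_i^2 + \alpha(\alpha-1)p_i$, which is precisely $\alpha^2\,\partial^2_{u_i}\!\log g(\mathbf 1) + \alpha(\alpha-1)\partial_{u_i}\!\log g(\mathbf 1)$ accounting for $\partial_{u_i}^2 g = 0$ (multi-affinity). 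In Step~3 what you call a ``positive-diagonal conjugation'' is more precisely a congruence $A \mapsto J A J$ with $J=\diag(1/z_{0,i})\succ 0$; since $J$ is symmetric this is both a conjugation and a congruence, and congruence is what preserves the sign of the quadratic form, so the reduction is valid. The only place that merits slight additional care is the boundary $\lambda$ with some $\lambda_i=0$: the entry $\P{j\mid i}$ of $\corMat_{\lambda*\mu}$ is undefined when $\P{i}=0$, so the cleanest interpretation is to pass to the distribution on the reduced ground set $\{i:\lambda_i>0\}$, where your interior argument applies directly; the continuity argument you sketch works for the symmetrized matrix $D^{-1/2}(K-\mathbf p\mathbf p^\intercal)D^{-1/2}$ (the off-diagonal entries involving index $i$ tend to $0$ and the diagonal to $1$ as $p_i\to 0$, since $K_{ij}\le p_i$), but stating it for the asymmetric $\corMat$ is a minor imprecision.
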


Now that we understand an alternate characterization of $\alpha$-fractional-log-concavity, we can use it to complete the proof of \cref{prop:flc-subdivision}.

\begin{proof}
	By induction, we only need to prove the statement when one element is subdivided. So we need to show that if $g:=g_\mu$ is fractionally log-concave, then
\[h(z_1^{(1)}, \ldots , z_1^{(k_1)}, z_2, \dots, z_n): =g(z_1^{(1)} + \ldots + z_1^{(k_1)}, z_2, \dots, z_n) \text{ is $\alpha$-fractionally log-concave} \]

By \cref{thm:correlation matrix vs FLC}, this is equivalent to \[\lambda_{\max}(\corMat_{\lambda' \ast \mu'} ) \leq \frac{1}{\alpha} \; \; \forall \vec{\lambda} = (\lambda_1^{(1)}, \dots, \lambda_1^{(k_1)}, \lambda_2, \dots, \lambda_n)\in \R_{\geq 0}^{n+k_1-1},\] where $\mu'$ is the distribution generated by $h.$ Without loss of generality, we assume $\sum_{j=1}^{k_1} \lambda_1^{(j)} = 1.$

\par Let $\Psi$ and $\Psi'$ be the correlation matrix of $\lambda \ast \mu$ and $\lambda' \ast \mu',$ where $\lambda = (1, \lambda_2, \dots, \lambda_2)\in \R_{\geq 0}^n.$ We want to relate the eigenvalues of $\Psi'$ to $\Psi,$ thereby showing that $\Psi'\leq \frac{1}{\alpha}.$ 

Note that
\begin{align*}
    &\P_{\lambda' \ast \mu'}{1^{(i_1)}} = \lambda_1^{(i_1)}\P_{\mu}{1}    \text{ and } \P_{\lambda' \ast \mu'}{1^{(i_1)} \given 1^{(i_2)}} = 0 \\
    &\forall j\neq 1: \P_{\lambda' \ast \mu'}{1^{(i_1)} \given j} = \lambda_1^{(i_1)}\P_{\lambda \ast \mu} { 1 \given j} \text{ and }
    \P_{\lambda' \ast \mu'}{j } = \P_{\lambda \ast \mu}{j} \text{ and } \P_{\lambda' \ast \mu'}{j \given 1^{(i_1)}} = \P_{\lambda\ast \mu}{j \given 1} 
\end{align*}

Let $\vec{v}^1, \dots, \vec{v}^n$ be the orthogonal basis  $\langle \cdot, \diag(\P_{\lambda \ast
\mu}{i^{(j_i)} })$ of eigenvectors of
$\Psi$ with corresponding eigenvalues $\rho_1 \geq \rho_2 \geq \dots \geq \rho_n.$ 
It is easy to check that for each $\vec{v}^i,$ the vector \[\vec{w}^i : =(\underbrace{v^i_1, \dots, v^i_1}_{k_1 \text{ times}}, v^i_2, \dots, v^i_n  \}\]
is an eigenvector  of $\Psi'$ associated with eigenvalue $\rho_i.$  In addition, $\Psi'$ also has eigenvectors $\vec{w}^{n+i}:= (u^i_{1}, \dots, u^i_{k_1},\underbrace{0, \dots, 0}_{n-1 \text{ times}} ) $ associated with eigenvalue $ 1,$ where $\set*{\vec{u}^i}_{i=1}^{k_1-1}$ forms an orthogonal basis wrt $\langle \cdot, \diag(\P_{\lambda \ast
\mu}{i})$ of the vector space $\set*{\vec{u} \in \R^{k_1} \given\vec{u} \perp (\lambda_1^{(i)})_{i=1}^{k_1}}.$  Observe that $\set*{\vec{w}^i}_{i=1}^{n+k_1-1}$ forms an orthogonal basis wrt $\langle \cdot, \diag(\P_{\lambda' \ast
\mu'}{i^{(j_i)} })$ of eigenvectors of $\Psi',$ thus the spectrum of $\Psi'$ is
\[\set*{\rho_1, \dots, \rho_n} \cup \set*{1^{(k_1-1)}}\]
Thus, $\lambda_{\max}(\Psi') \leq \max(\rho_1, 1) \leq\frac{1}{\alpha}$,
 by \cref{thm:correlation matrix vs FLC} applied to $\mu$.
\end{proof}

Next we prove \cref{prop:entropic-ind-subdivision}.

\begin{proof}
	We have a characterization of entropic independence as
	\[ g_\mu(z_1^\alpha,\dots,z_n^\alpha)^{1/k\alpha}\leq \sum_{i\in S} p_i z_i, \]
	where $p_i=P_{S\sim \mu}{i\in S}/k$ are the normalized marginals of $\mu$. Subdivision replaces the generating polynomial by a new polynomial, where $z_i$ is replaced by an average of $t_i$ new variables. It is easy to see that the normalized marginal of each of the new variable is simply $p_i/t_i$. Thus $z_i$ in the r.h.s.\ of the above expression also gets replaced by the same average of the $t_i$ new variables. So the proof is just a matter of plugging in $(z_i^{(1)}+\dots+z_i^{t_i})/t_i$ for $z_i$ in the above ineuality.
\end{proof}

	\section{Conclusion}

A natural direction to extend our research is to better understand the trade-offs between the run-time of approximate sampling algorithms and how much information we are given in the form of marginals for $\alpha$-fractionally-log-concave distributions. Recall that $\alpha$-fractional-log-concavity implies $1/\alpha$-entropic independence, but still captures many interesting distributions (see \cref{subsec:applications}), so while we exhibit a lower bound against domain sparsification for $1/\alpha$-entropically independent distributions, even with access to higher-order marginals, the stronger assumption of fractional log-concavity may avoid this barrier.

For a $k$-uniform distribution, the entire distribution itself is indeed captured by the order-$k$ marginals. So if one is to believe that $\alpha$-fractionally log-concave distributions behave like distributions over $1/\alpha$-sized sets, does that mean order-$1/\alpha$ marginals are sufficient to get rid of all dependence on $n$ in domain sparsification? For $\alpha=1$, the answer is affirmative as was shown by \textcite{AD20}. A fascinating open question is whether we can extend this to fractionally log-concave polynomials as suggested by \cref{conj:high-order}.
	
	\PrintBibliography
\end{document}